\documentclass{SciPost}

\hypersetup{
    colorlinks,
    linkcolor={red!50!black},
    citecolor={blue!50!black},
    urlcolor={blue!80!black}
}

\usepackage{bbm}
\usepackage{mathtools,bm}
\usepackage{graphicx}
\usepackage{booktabs}
\usepackage{microtype}
\usepackage[bitstream-charter]{mathdesign}
\DeclareSymbolFont{usualmathcal}{OMS}{cmsy}{m}{n}
\DeclareSymbolFontAlphabet{\mathcal}{usualmathcal}

\newcommand{\dd}{\mathrm d}
\newcommand{\RP}{\mathbb{RP}}
\newcommand{\cA}{\mathcal A}

\newcommand{\cK}{\mathcal K}
\newcommand{\cO}{\mathcal O}
\newcommand{\abs}[1]{\left|#1\right|}

\newcommand{\Area}{\operatorname{Area}}
\newcommand{\ket}[1]{\left|#1\right\rangle}
\newcommand{\bra}[1]{\left\langle#1\right|}

\usepackage{amsthm}
\newtheorem{thm}{Theorem}[section]
\newtheorem{prop}[thm]{Proposition}

\newcommand{\Sbb}{\mathbb{S}}
\newcommand{\Kbb}{\mathbb{K}}
\newcommand{\Rbb}{\mathbb{R}}
\newcommand{\M}{\mathcal{M}}
\renewcommand{\S}{\mathcal{S}}

\renewcommand{\v}[1]{\ensuremath{\mathbf{#1}}}

\fancypagestyle{SPstyle}{
\fancyhf{}

\fancyfoot[C]{\textbf{\thepage}}
}

\begin{document}

\pagestyle{SPstyle}

\begin{center}{\Large \textbf{\color{scipostdeepblue}{
A Geodesic Route toward Holography beyond AdS: Tessellating the Schwarzschild Black Hole
}}}\end{center}

\begin{center}\textbf{
Shi Gu\textsuperscript{2$\star$}, Qiang Wen\textsuperscript{1$\dagger$}, Haocheng Zhong\textsuperscript{1$\ddagger$} and Yiwei Zhong\textsuperscript{1$\ddagger$}\footnote{The authors are listed in alphabetical order. The ordering of the affiliations is unconventional, reflecting the special graduate degree requirements of Southeast University.}
}\end{center}

\begin{center}
{\bf 1} School of Physics and Shing-Tung Yau Center, Southeast University, Nanjing, China
\\
{\bf 2} College of Computer Science and Technology, Zhejiang University, Hangzhou, China
\\
$\star$ \href{mailto:gus@zju.edu.cn}{\small gus@zju.edu.cn},  
$\dagger$ \href{mailto:wenqiang@seu.edu.cn}{\small wenqiang@seu.edu.cn}
\\
$\ddagger$ These two authors contributed equally to this work.
\end{center}

\section*{\color{scipostdeepblue}{Abstract}}
\textbf{\boldmath{%
In vacuum anti-de~Sitter (AdS) space, the geometry admits a perfect tessellation by a geodesic network. This tessellation characterizes the background exactly, and a partial entanglement entropy (PEE) tensor-network toy model of AdS/CFT can be defined on it. It has so far been unclear whether this construction can be extended beyond vacuum AdS. In this paper we take the first step in this direction. We show that the exterior region of the Schwarzschild black hole and its Einstein--Rosen bridge are perfectly tessellated by a specific geodesic gas emitted from the boundary, so that Crofton reconstruction holds in these regions. We then prove that such a geodesic tessellation and the Crofton reconstruction it defines extend to more generic Riemannian manifolds. This allows us to construct a PEE tensor-network model of holographic duality on more generic geometric backgrounds. In the case of the PEE tensor-network model for the Schwarzschild background, we can concretely realize the Bekenstein--Hawking entropy, the Ryu--Takayanagi formula and the ER=EPR proposal. Our approach opens a new route towards holographic toy models beyond AdS/CFT.
}}

\vspace{10pt}
\noindent\rule{\textwidth}{1pt}
{\setlength{\cftbeforesecskip}{4pt}\tableofcontents}
\noindent\rule{\textwidth}{1pt}
\vspace{10pt}

\section{Introduction}
\label{sec:intro}

The AdS/CFT correspondence \cite{Maldacena:1997re,Gubser:1998bc,Witten:1998qj}
is the most concrete realization of the holographic principle: a
$(d+1)$-dimensional quantum theory of gravity in anti-de~Sitter space is
dual to a $d$-dimensional conformal field theory without gravity, residing
on the boundary of AdS.  This duality provides a non-perturbative definition
of quantum gravity in asymptotically AdS spacetimes and has become a
central framework for addressing fundamental questions about the quantum
nature of geometry. In the context of AdS/CFT, the Ryu--Takayanagi (RT) formula \cite{Ryu:2006bv,Ryu:2006ef} marked a
pivotal advance by linking the entanglement entropy of a boundary subregion
$A$ to the area of a minimal surface $\Sigma_A$ homologous to $A$ in the
bulk:
\begin{equation}
  S(A) = \frac{\Area(\Sigma_A)}{4G}.
  \label{eq:RT}
\end{equation}
This relation reveals that spacetime geometry is deeply connected to
quantum entanglement: areas of bulk surfaces are measures of boundary
entanglement.  The RT formula and its generalizations \cite{Hubeny:2007xt,Lewkowycz:2013nqa,Faulkner:2013ica,Engelhardt:2014gca} have profoundly shaped our understanding of
emergent spacetime, suggesting that geometry is not fundamental but rather
a collective description of quantum entanglement.

The connection between geometry and entanglement was made concrete through
two complementary frameworks. The \textit{Crofton reconstruction} framework \cite{Santalo,Helgason} was introduced into AdS/CFT by Czech et al. \cite{CzechEtAl}, who reformulated the RT formula in the language of integral geometry (see also \cite{Czech:2015tensor,Czech:2016stereo,Czech:2017zfq,Czech:2019time} for developments along this line). Concretely, consider a $d$-dimensional Riemannian manifold $\mathcal{M}$ that
admits a kinematic space, and let $\Sigma$ be an arbitrary $(d-1)$-dimensional
hypersurface in $\mathcal{M}$.  The Crofton formula \cite{Santalo,Helgason}
states that the area of $\Sigma$ can be evaluated by counting the intersections
between $\Sigma$ and the geodesics in the kinematic space:
\begin{equation}
\Area(\Sigma)=\frac{d-1}{\Omega_{d-2}}\int_{\mathbb{K}_{\mathcal{M}}}\#(\gamma\cap\Sigma)\,\dd\nu_{\mathcal M},
\label{eq:crofton-general}
\end{equation}
where $\gamma$ denotes an unoriented geodesic in the kinematic space $\mathbb{K}_{\mathcal{M}}$, $\dd\nu_{\mathcal M}$ is the
kinematic measure, $\Omega_{d-2}$ is the area of the unit $(d-2)$-sphere, and $\#(\gamma\cap\Sigma)$ counts the number of times
$\gamma$ pierces $\Sigma$.  Equivalently, the intersection density between
$\Sigma$ and the geodesics is constant on $\Sigma$. In pure AdS, this measure is inherited from the $SO(2,d)$ isometry group, and in the context of AdS/CFT, it can be identified with the so-called \textit{partial entanglement entropy} (PEE) structure of the boundary dual CFT \cite{Lin:2024dho}.\footnote{In AdS$_3$, the PEE structure of the dual CFT$_2$ reduces to the conditional mutual information, which coincides with the observation in \cite{CzechEtAl}.} The Crofton reconstruction framework thus reformulates holographic entanglement entropy in the language of integral geometry. 

The \textit{tensor network} framework \cite{Swingle:2009bg,HaPPY,HaydenEtAl}
provides a discrete, constructive toy model for the AdS/CFT correspondence. In these
models, the bulk geometry is discretized by a graph whose edges carry entangled qudits. The minimal cut through the network that separates a boundary region reproduces its entanglement entropy via an analogue of the RT formula. In the large-$N$ saddle-point approximation, the random tensor networks \cite{HaydenEtAl} in particular capture key features of the RT formula, including the phase transitions of the RT surfaces and the entanglement wedge reconstruction.

A unified framework emerged from the PEE thread program \cite{Wen:2018whg,Wen:2020ech,Lin:2023rxc,Han:2019scu}.  The
PEE $\mathcal{I}(A,B)$ is a measure of two-body correlation with the key feature of additivity, as well as a set of physical properties analogous to those of the mutual information \cite{Casini:2008wt,Wen:2019iyq}. See \cite{Lin:2024dho} for a more accurate description of the physical requirements that determine the PEE for the vacuum state of a CFT.  In Ref.~\cite{Lin:2023rxc} it was shown that any two-point PEE $\mathcal{I}(\mathbf{x},\mathbf{y})$ can be geometrized by
a bulk geodesic connecting the boundary points $\mathbf{x}$ and $\mathbf{y}$, while the value of $\mathcal{I}(\mathbf{x},\mathbf{y})$ determines the density distribution of the geodesics. Such geodesics are called the PEE threads.  The set of all the PEE threads forms a continuous network that perfectly tessellates the AdS geometry, with the thread flux through any area element in the AdS bulk being a constant $1/4G$. In other words, the total crossing number between any co-dimension-one surface in the AdS bulk and the PEE thread network is always given by $\frac{\Area(\Sigma)}{4G}$. This is indeed an equivalent formulation of the Crofton formula in integral geometry, and the RT formula then has a reformulation based on the Crofton formula: the minimal surface for a region $A$ is precisely the surface that minimizes the number of intersections with the PEE thread network.

This construction was extended in several directions.
Refs.~\cite{Lin:2023rxc,Basu:2026btz,Basu:2026sub} established the connection to bit threads \cite{FreedmanHeadrick}, showing
that, for any static interval or spherical region, a unique bit thread configuration emerges from the PEE thread configuration.  Ref.~\cite{Lin:2024dho} further proved that the PEE threads coincide with the geodesics in the kinematic space, and reproduced the Crofton formula in AdS based on the PEE structure of the boundary CFT.  Ref.~\cite{WenXuZhong} assigned quantum states to the vertices of the PEE network and constructed factorized, HaPPY-like, and random PEE tensor networks. The factorized model reproduces the RT entropy for spherical regions, while the random construction extends the entropy calculation to generic regions in the appropriate large-bond-dimension limit. Ref.~\cite{BasuWen} analyzed the kinematic space for \emph{gravitational subregions} in AdS, showing that PEE threads emanating from a co-dimension-one surface perfectly cover the subregion, and built holographic tensor network models on the confined PEE thread network, realizing features of the surface-state correspondence and the generalized entanglement wedges within the stated geometric and tensor-network assumptions.

All of these discussions are based on a single geometric fact: for vacuum AdS space there exists a uniform and isotropic tessellation of the space using geodesics, i.e. a kinematic space in the terminology of integral geometry \cite{Santalo,Helgason}.  In the holographic context, each geodesic represents a portion of entanglement between boundary degrees of freedom, and the measure on the space of geodesics is provided by the PEE structure of the dual CFT.  In integral geometry, the areas of arbitrary bulk surfaces are given by chord counts, while in the holographic context, they are integrals of the additive PEE of the dual CFT, which reproduces entanglement entropies for boundary regions. 

Holographic applications of Crofton reconstruction have mainly focused on homogeneous spatial geometries, such as the hyperbolic slices of vacuum AdS, where symmetry provides a convenient kinematic measure. Symplectic formulations of Crofton geometry also exist in non-homogeneous settings \cite{AlvarezPaivaFernandes,AlvarezPaivaHilbert,AlvarezPaivaBerck}. The additional question for a boundary-generated network is whether the chosen source reaches all geodesic directions in the target region. The subregion construction of Ref.~\cite{BasuWen} makes this covering question explicit in AdS. Here we consider spherically symmetric, asymptotically flat spaces, including the Schwarzschild black hole. Their $SO(3)$ isometry group alone does not determine the required distribution over radial positions and geodesic directions. We instead obtain the chord measure from the intrinsic boundary flux and geodesic transport, and show that an outer sphere supplies a complete Crofton reconstruction of the exterior. Extending the discussion to the Einstein--Rosen bridge shows that this single-source reconstruction stops at the horizon unless sources on the other side are supplied.

Our strategy is to start from an outer sphere $\S_{+}$ of radius $R_+$ and 
the intersections $(x,[\mathbf v])$ defined on it. Here $x$ labels a point of
$\S_{+}$, and $[\mathbf v]$ is the unoriented tangent direction of a geodesic that intersects
$\S_{+}$ at $x$. We assign to these intersections the intrinsic
measure \cite{BasuWen},
\begin{equation}\label{eq:intrinsic}
	\dd\Gamma_+
	=\abs{\mathbf v\cdot\mathbf n_+}\,\dd A_+\,
	\dd\Omega_{[\mathbf v]}.
\end{equation}
Here $\mathbf n_+$ denotes the inward unit normal vector to $\S_{+}$ and $\dd\Omega_{[\mathbf v]}$ is the rotation-invariant measure on the
projective plane $\RP^2$ of unoriented line directions $(\mathbf v\sim-\mathbf v)$, normalized
by $\dd\Omega_{[\mathbf v]}=\tfrac12\dd\Omega_{\mathbf v}$, i.e.
$\dd\Omega_{[\mathbf v]}=\sin\theta\,\dd\theta\,\dd\phi$ with $\theta\in[0,\pi/2]$.
Writing $\theta$ for the angle between $\mathbf v$ and the normal $\mathbf n_+$, the factor
$\abs{\mathbf v\cdot\mathbf n_+}=\cos\theta$ is the flux contributed by each crossing (it does
not affect the distribution of the intersections). Equation~\eqref{eq:intrinsic} thus states that the intersections
are uniformly distributed over $\S_{+}$ and that the directions are isotropically
distributed at each point $x$. Integrating the intrinsic measure \eqref{eq:intrinsic} over any surface reproduces its area, in analogy with the Crofton formula. For example, the area of $\S_{+}$ can be computed by
\begin{equation}\label{eq:intrinsicp}
\int\dd\Gamma_+=\Area(\S_{+})\int_{\RP^2}
\abs{\mathbf v\cdot\mathbf n_+}\,
\dd\Omega_{[\mathbf v]}=\pi\,\Area(\S_{+}),
\end{equation}
where we used $\int_{\RP^2}\abs{\mathbf v\cdot\mathbf n}\,\dd\Omega_{[\mathbf v]}=\pi$. To relate this to the Crofton formula, we interpret the intersections weighted by the intrinsic measure as the crossings of the surface by a special set of geodesics. Counting these crossings then reproduces the area up to the universal factor $\pi$: $\Area(\S_{+})=\tfrac{1}{\pi}\int\dd\Gamma_+$.

However, it is not guaranteed that the area of an arbitrary surface can be reconstructed by counting its intersections with this special set of geodesics. In this paper, we therefore start from the geodesic gas\footnote{By a ``geodesic gas'' we mean an uncountably infinite family of geodesics weighted by a measure on the space of geodesics, so that it carries a well-defined density and flux: counting the geodesics that cross a surface is equivalent to computing the flux of the gas through that surface.} shot inward from $\S_{+}$ along the intersections weighted by the intrinsic measure~\eqref{eq:intrinsic}, and explicitly compute its intersections with other surfaces inside $\S_{+}$, verifying whether the Crofton formula still holds for them. In short, our strategy is described by: 
\[
\begin{gathered}
\boxed{\text{boundary intrinsic measure}}\\
\downarrow\\
\boxed{\text{boundary sourced geodesic gas}}\\
\downarrow\\
\boxed{\text{intersection counting for surfaces}}\\
\downarrow\\
\boxed{\text{Crofton formula}}
\end{gathered}
\]
Note that the geodesic gas is described by a set of vector fields and comprises uncountably many curves. Counting geodesics (or intersections) therefore means computing the flux of the geodesic flow through a surface.

So far, our understanding of holographic duality is based almost entirely on the AdS/CFT correspondence \cite{Maldacena:1997re,Gubser:1998bc,Witten:1998qj}, whereas holography for more general geometric backgrounds has been studied much less, and no complete framework is available. Since the universe we inhabit is not anti--de Sitter, a genuine understanding of quantum gravity in realistic settings requires going beyond AdS/CFT and developing holographic dualities for more general gravitational theories. Some progress in this direction has been made in flat-space holography, both in its celestial and Carrollian formulations \cite{Bagchi:2010zz,LiTakayanagiFlat,Wen:2018mev,Jiang:2017ecm,Pasterski:2017celestial,Raclariu:2021celestial,Pasterski:2021lectures,Mason:2023carrollian,Bagchi:2025vri,Nguyen:2025zhg,Ruzziconi:2026bix,Zhu:2026celestial,Donnay:2023celestial,Pasterski:2021snowmass}. Other settings include Lifshitz holography \cite{Kachru:2008yh,Taylor2015} and warped AdS holography \cite{Anninos:2008fx,Detournay:2012pc,Song:2016gtd,Song:2016warped}, as well as the Kerr/CFT correspondence \cite{Guica:2008mu,Compere:2012jk} and holography with a $T\bar T$ deformation \cite{McGough:2016TTbar,Guica:2019tbar,Apolo:2019tsT,Apolo:2023glueon}. Building on this geometric structure of the Crofton reconstruction in spaces beyond AdS, we will follow the route of Refs.~\cite{WenXuZhong,BasuWen} and construct a PEE tensor-network model of the holographic duality, with the aim of understanding holography beyond AdS.

The paper is organized as follows.  Section~\ref{sec:boundary} defines the
boundary source and geodesic sectors.  Sections~\ref{sec:transport} and \ref{sec:crofton generic} prove
the transported crossing measure equals the intrinsic Crofton measure on
every interior surface.  Section~\ref{sec:homogeneous} explains why
Schwarzschild differs from homogeneous spaces.  Section~\ref{sec:bridge} extends the construction past the throat,
computes the horizon area from the crossing sector, and shows that the Crofton
formula fails on the far side of the bridge unless bilateral boundary sources are
supplied.
Section~\ref{sec:liouville transport} gives the general Liouville-transport
argument. Section~\ref{sec:pee-shell} equips the shell chords with a PEE tensor-network state, computes its horizon and asymptotic
subregion entropies, and identifies the additional conditions needed for
an RT interpretation beyond AdS.
Section~\ref{sec:conclusion} concludes.

\section{Crofton reconstruction for the exterior of the Schwarzschild black hole}\label{sec:2}
\subsection{Geodesics sourced from a boundary sphere}
\label{sec:boundary}

Consider a spherically symmetric, time-symmetric spatial slice with a
general radial metric
\begin{equation}
  \dd s^2=\Lambda(r)^2\dd r^2+r^2\dd\Omega_2^2,
  \qquad \dd\Omega_2^2=\dd \theta^2+\sin^2 \theta \dd\phi^2,
  \label{eq:metric}
\end{equation}
where $\Lambda(r)$ is smooth, positive, and finite outside the horizon $r=r_h:~\Lambda(r_h)=\infty$, and the radius $r$ is monotonic throughout the whole exterior. We work within the annular shell
\begin{equation}
 \cA=\{R_-\le r\le R_+\},
 \qquad r_h<R_-<R_+,
 \label{eq:shell}
\end{equation}
with inner sphere $\S_{-}$ and outer sphere $\S_{+}$. The Schwarzschild geometry,
\begin{equation}
 \Lambda(r)^2=\Bigl(1-\frac{2Gm}{r}\Bigr)^{-1},\qquad r_h=2Gm,
 \label{eq:schwarzschild}
\end{equation}
is the primary case of interest, where we also call $\cA$ the Schwarzschild shell, but the constructions of
Sections~\ref{sec:boundary} and~\ref{sec:transport} apply to \emph{any} $\Lambda(r)$ with
monotonic $r$.

Now we investigate the properties of the spatial geodesics. By spherical symmetry every geodesic lies in a great-circle plane. With arc length $s$ and polar angle $\phi$ in that plane, the angular momentum
\begin{equation}
 J=\abs{r^2\dot\phi}
 \label{eq:J-def}
\end{equation}
is a constant of motion. Here the dot denotes $\dd/\dd s$. Note that choosing $s$ to be the affine parameter is equivalent to the unit-speed condition for the metric \eqref{eq:metric}:
\begin{equation}
 \Lambda(r)^2\dot r^{\,2}+\frac{J^2}{r^2}=1,
 \label{eq:unit-speed}
\end{equation}
so
\begin{equation}
 \dot r^{\,2}=\Lambda(r)^{-2}\Bigl(1-\frac{J^2}{r^2}\Bigr).
 \label{eq:rdot}
\end{equation}

For $J>r_h$, equation \eqref{eq:rdot} admits a turning point at $r=J$. Moreover, differentiating Eq.~\eqref{eq:rdot} away from the turning point and taking the smooth limit $r\to J$ gives
\begin{equation}
	\left.\ddot r\right|_{r=J}
	=\frac{1}{J\Lambda(J)^2}>0,
	\label{eq:turning-point-minimum}
\end{equation}
which confirms that $r=J$ is a radial minimum. In the Schwarzschild case, this reduces to $\left.\ddot r\right|_{r=J}=\bigl(1-r_h/J\bigr)/J$. If $0\leq J\leq r_h$, the zero $r=J$ lies at or inside the horizon, so there is no radial turning point in the exterior region $r>r_h$.

A shell chord is then defined to be a maximal connected segment of positive arclength of a unit-speed spatial geodesic $\gamma$ contained in $\mathcal A$. If the intersection $\gamma\cap\cA$ has more than one connected component, each component is regarded as a different chord. All shell chords are considered unoriented, and we denote their space by $\mathbb K_{\mathcal A}$. At this stage we have not defined a measure on $\mathbb K_{\mathcal A}$, nor should it be identified with the kinematic space mentioned earlier.

Equation \eqref{eq:rdot} implies the following classification of spatial geodesics and hence the shell chords, apart from the tangent limiting case $J=R_+$, which has zero measure \footnote{The set with $J=R_+$ is tangent to the outer sphere and its intersection with $\mathcal A$ has zero arclength and hence is not a shell chord.}:
\begin{prop}\label{prop:shell chord classification}
	A spatial geodesic $\gamma$ can be classified according to the value of $J$:
	\begin{enumerate}
		\item $J>R_+$: the geodesic cannot enter the shell, so $\gamma\cap\cA=\emptyset$;
		
		\item $R_-\leq J<R_+$: the radial minimum $r=J$ lies in $\cA$, so $\gamma\cap\cA$ has both endpoints on $\S_{+}$;
		
		\item $0\leq J<R_-$: there is no turning point in $\cA$, and each connected component of $\gamma\cap\cA$ is monotonic in $r$ and joins $\S_{+}$ to $\S_{-}$.
	\end{enumerate}
	Therefore, the shell chords in $\mathbb K_{\mathcal A}$ should satisfy $0\leq J<R_+$.
\end{prop}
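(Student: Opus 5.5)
The proof will rest entirely on the radial equation \eqref{eq:rdot}. Along any unit-speed geodesic we have $r\ge J$ wherever the geodesic exists in the exterior, because $\Lambda^{-2}>0$ there and the right-hand side of \eqref{eq:rdot} must be nonnegative. The plan is to use this lower bound on $r$, together with the behaviour of $r(s)$ near its critical points, to determine how $\gamma$ can meet the shell $\mathcal A=\{R_-\le r\le R_+\}$.

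\emph{Case $J>R_+$.} Here $r\ge J>R_+$ at every point of $\gamma$, so $\gamma$ never reaches $\mathcal A$ and $\gamma\cap\mathcal A=\emptyset$.

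\emph{Case $R_-\le J<R_+$.} Since $r\ge J\ge R_-$ everywhere, $\gamma$ can never cross $\mathcal S_-$; at most it touches $\mathcal S_-$ tangentially when $J=R_-$. The key step is to show that $r(s)$ has exactly one critical point. Critical points of $r$ occur only where $r=J$. At such a point \eqref{eq:turning-point-minimum} gives $\ddot r>0$, so each critical point is a strict local minimum. A smooth function of one variable all of whose critical points are strict minima can have at most one of them, since two minima would force a maximum between them. Hence $r(s)$ decreases strictly to its value $J$ at the minimum and then increases strictly.

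Next I check that $r\to\infty$, or at least that $r$ exceeds $R_+$, in both directions. Away from the turning point $\dot r^2$ is bounded below on any compact range $[J+\epsilon,R]$, so $r$ leaves every compact set. Therefore the component of $\gamma\cap\mathcal A$ containing the minimum is a single arc. This arc runs from $\mathcal S_+$ down to $r=J\in\mathcal A$ and back up to $\mathcal S_+$. By monotonicity on each side of the minimum there is no other component, so $\gamma\cap\mathcal A$ has both endpoints on $\mathcal S_+$.

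\emph{Case $0\le J<R_-$.} Throughout $\mathcal A$ we have $r\ge R_->J$, so $\dot r^2\ge \Lambda^{-2}\bigl(1-J^2/R_-^2\bigr)>0$. Here $\Lambda$ is finite and bounded on the compact shell. Thus $\dot r$ never vanishes on $\mathcal A$, and each connected component of $\gamma\cap\mathcal A$ is strictly monotonic in $r$. The uniform lower bound on $\abs{\dot r}$ makes the arclength of each component finite. A maximal component must therefore exit through both boundary radii, entering at $r=R_+$ and leaving at $r=R_-$, or the reverse. There may be more than one such component, for instance if the geodesic continues inward and re-emerges via the Einstein--Rosen bridge, but each one joins $\mathcal S_+$ to $\mathcal S_-$.

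Combining the three cases, and discarding the tangent set $J=R_+$, which has zero arclength in $\mathcal A$, every shell chord has $0\le J<R_+$.

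\textbf{Main obstacle.} The delicate point is the regularity of the turning-point analysis. Since $\dot r$ behaves like a square root near $r=J$, \eqref{eq:rdot} alone does not determine the motion there. I would instead use the second-order geodesic equation, where $\ddot r$ is smooth, to justify both uniqueness of the minimum and the claim that trajectories do not stall at $r=J$. I would also treat the borderline value $J=R_-$, where the chord grazes $\mathcal S_-$ but stays in $\mathcal A$, as belonging to case 2.
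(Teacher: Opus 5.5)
Your proof is correct and takes the same route as the paper, which reads the classification directly off Eq.~\eqref{eq:rdot} together with the turning-point result \eqref{eq:turning-point-minimum}. You supply details the paper leaves implicit (the bound $r\ge J$, uniqueness of the minimum, escape past $R_+$, the uniform lower bound on $\abs{\dot r}$ in the third case, and the use of the second-order equation to rule out the spurious solution $r\equiv J$), but your example of multiple components is wrong: a geodesic crossing the throat never returns to the same side, and the actual source of two components is $r_h<J<R_-$, where the geodesic exits through $\S_{-}$, turns at $r=J$ below the shell, and re-enters it.
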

See Fig.\ref{fig:chords} for an example. The above classification states that all shell chords reach the boundary of the shell, and especially they must anchor on the outer sphere $\S_{+}$. We call the second type an outer--outer chord (or returning chord) and denote the space of all such chords by $\Kbb_{++}$, while $\Kbb_{-+}$ is the space of inner--outer chords in the third type.  Every connected component is a separate element, and all elements of $\Kbb_{++}$ and $\Kbb_{-+}$ are considered unoriented.  The set of the shell chords is therefore
\begin{equation}
	\Kbb_{\cA}=\Kbb_{++}\cup \Kbb_{-+}.
\end{equation}

\begin{figure}[h]
	\centering
	\includegraphics[scale=0.8]{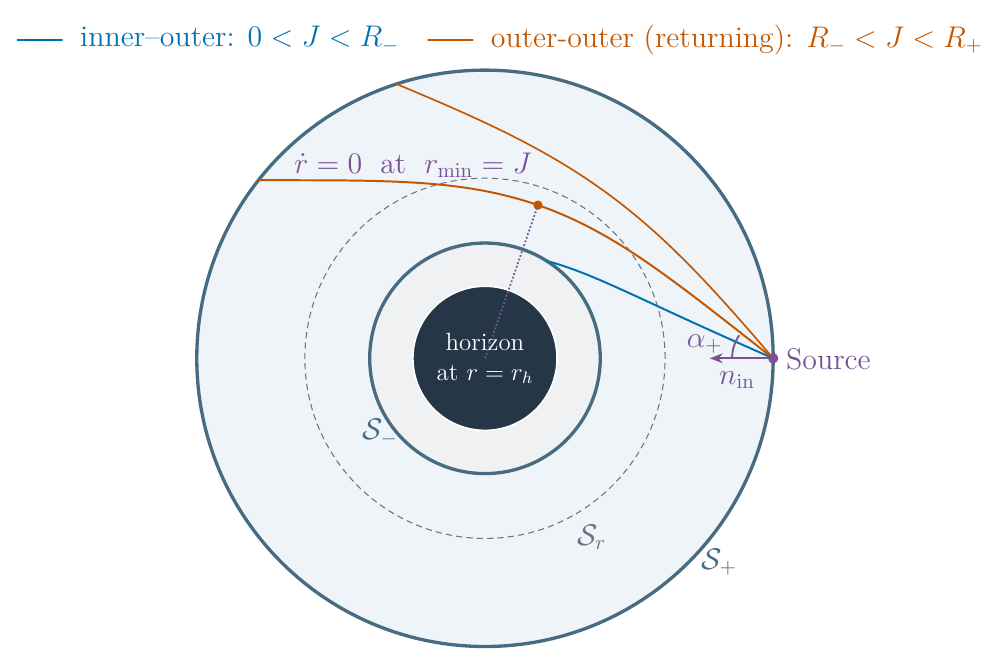}
	\caption{Two types of geodesic chords are demonstrated, where the inner-outer chords are shown in blue and the outer-outer chords are shown in orange. For outer-outer chords, the number of intersections $N_r$ with an intermediate sphere $\mathcal{S}_r$ may differ.}	
	\label{fig:chords}	
\end{figure}

We next record the nontrapping property of $\mathbb K_{\mathcal A}$, namely that all shell chords not only reach the shell boundary as indicated in prop.\ref{prop:shell chord classification}, but also reach the shell boundary in finite arclength. For an inner-outer chord with $0\leq J<R_-$, its length is given by \eqref{eq:rdot}:
\begin{equation}
 L_{-+}(J)=\int_{R_-}^{R_+}
 \frac{\Lambda(r)\,\dd r}{\sqrt{1-J^2/r^2}}.
 \label{eq:inner-outer-length}
\end{equation}
Since $\Lambda(r)$ is finite on the closed interval $[R_-,R_+]$ and $1-J^2/r^2$ is strictly positive there, the integrand is bounded and $L_{-+}(J)$ is finite. For an outer--outer chord with $R_-\leq J<R_+$, the total length is
\begin{equation}
 L_{++}(J)=2\int_J^{R_+}
 \frac{\Lambda(r)\,\dd r}{\sqrt{1-J^2/r^2}}.
 \label{eq:outer-outer-length}
\end{equation}
Near the turning point,
\begin{equation}
 \frac{1}{\sqrt{1-J^2/r^2}}
 =\sqrt{\frac{J}{2(r-J)}}
 \left[1+\mathcal O\!\left(\frac{r-J}{J}\right)\right].
 \label{eq:turning-point-integrability}
\end{equation}
The singularity is integrable and $\Lambda(J)$ is finite, so $L_{++}(J)$ is also finite. The nontrapping property of $\mathbb K_{\mathcal A}$ ensures that there is no interior shell chord which is trapped inside the shell forever, and every shell chord anchored on the shell boundary is well-defined.

The geodesic initial-value theorem gives a second useful observation. Any interior point $x=\{x^i\}$ in the shell, together with a unit tangent vector $\mathbf v=\{v^i\}$, determines a unique oriented geodesic whose maximal connected segment in $\mathcal A$ is a shell chord. Together with the nontrapping property, this implies that every interior point and every unoriented unit direction there lies on an element of $\mathbb K_{\mathcal A}$. Hence $\mathbb K_{\mathcal A}$ covers the interior of the shell, although not necessarily uniformly.

The above properties of the shell chords ensure that $\mathbb K_{\mathcal A}$ consists of all geodesic segments emanating from the outer sphere $\S_{+}$. More explicitly, for the metric in Eq.~\eqref{eq:metric}, the inward unit normal to the outer sphere is
\begin{equation}
	\mathbf n_+=-\Lambda(R_+)^{-1}\partial_r.
	\label{eq:outer-inward-normal}
\end{equation}
The inward source section on $\S_{+}$ is
\begin{equation}
	C_+^{\mathrm{in}}
	\equiv \left\{(x,\mathbf v):x\in \S_{+},\;
	h(\mathbf v,\mathbf v)=1,\;
	h(\mathbf v,\mathbf n_+)>0\right\},
	\label{eq:source-section}
\end{equation}
where $h$ is the spatial metric and $\mathbf v$ is the inward unit tangent representative of the corresponding unoriented line direction $[\mathbf v]$. The source data in $C_+^{\mathrm{in}}$ label \emph{intersections} with $\S_{+}$, rather than distinct geodesic chords in $\mathcal A$. Let
\begin{equation}\label{eq:map}
	q:C_+^{\mathrm{in}}\longrightarrow\mathbb K_{\mathcal A}
\end{equation}
map a set of inward boundary data $z=(x,\mathbf v)$ to its corresponding maximal unoriented shell chord $\gamma=q(z)$. In other words, for each intersection in $C_+^{\mathrm{in}}$, we shoot a geodesic into $\mathcal{A}$ along it. According to our previous discussion, the mapping \eqref{eq:map} is surjective, so the shell chords emanating from the outer sphere cover the shell:
\begin{prop}[Completeness of the outer source]
 \label{prop:outer-source-completeness}
 The chord projection $q:C_+^{\mathrm{in}}\to\mathbb K_{\mathcal A}$ is surjective. Equivalently,
 \begin{equation}
  q\bigl(C_+^{\mathrm{in}}\bigr)=\mathbb K_{\mathcal A}.
  \label{eq:outer-source-completeness}
 \end{equation}
\end{prop}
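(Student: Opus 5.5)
Let $\gamma\in\mathbb K_{\mathcal A}$ be an arbitrary shell chord. The plan is to exhibit an explicit preimage: an endpoint of $\gamma$ on $\S_{+}$ together with the inward tangent there, and then check that shooting from these data reproduces exactly $\gamma$. By Prop.~\ref{prop:shell chord classification}, $\gamma$ has angular momentum $0\le J<R_+$ and belongs either to $\Kbb_{++}$ or to $\Kbb_{-+}$. In both cases at least one endpoint $x$ lies on $\S_{+}$. The nontrapping property guarantees that this endpoint is reached in finite arclength, from the finite lengths \eqref{eq:inner-outer-length} and \eqref{eq:outer-outer-length}.

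First I fix the orientation of $\gamma$ so that it starts at $x\in\S_{+}$ and enters $\mathcal A$. Let $\mathbf v$ be its unit tangent at $x$. I need $h(\mathbf v,\mathbf n_+)>0$ strictly, so that $(x,\mathbf v)\in C_+^{\mathrm{in}}$.

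\begin{itemize}
\item Since $\gamma$ lies in $r\le R_+$ near $x$, we have $\dot r\le 0$ at $x$.
\item By \eqref{eq:rdot}, $\dot r^2=\Lambda(R_+)^{-2}(1-J^2/R_+^2)>0$, because $J<R_+$. Hence $\dot r<0$.
\item By \eqref{eq:outer-inward-normal}, $h(\mathbf v,\mathbf n_+)=-\Lambda(R_+)\dot r>0$.
\end{itemize}

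This is precisely where the tangent case $J=R_+$ must be excluded. It is excluded automatically, because such curves are not chords.

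Second, I show $q(x,\mathbf v)=\gamma$. By the geodesic initial-value theorem, the oriented geodesic determined by $(x,\mathbf v)$ coincides with $\gamma$ on their common domain. Its maximal connected segment in $\mathcal A$ starting at $x$ is then the connected component of $\gamma\cap\mathcal A$ containing $x$, which is $\gamma$ itself. Forgetting the orientation gives $q(x,\mathbf v)=\gamma$. This proves $\mathbb K_{\mathcal A}\subseteq q(C_+^{\mathrm{in}})$.

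The reverse inclusion holds by the definition of $q$. Any $(x,\mathbf v)\in C_+^{\mathrm{in}}$ has $h(\mathbf v,\mathbf n_+)>0$, so $\dot r<0$ and hence $J<R_+$. The geodesic therefore genuinely enters the shell and gives a segment of positive arclength. By nontrapping this segment is maximal and finite, so it is an element of $\mathbb K_{\mathcal A}$.

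The main subtlety is the bookkeeping of connected components. A single geodesic may meet $\mathcal A$ in several components, each counted as a separate chord. I must check that the component picked out by $q$ is exactly the one containing $x$ and not a later re-entry. This follows because "maximal connected segment starting at $x$" is by definition the component through $x$.

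The same care is needed for $\Kbb_{++}$ chords, which have two preimages, one at each endpoint. This shows $q$ is surjective but in general not injective, consistent with the remark that $C_+^{\mathrm{in}}$ labels intersections rather than chords.
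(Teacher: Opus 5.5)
Your proof is correct and follows essentially the same route as the paper, which derives surjectivity from three ingredients: the shell-chord classification (every chord has an endpoint on $\S_{+}$), the nontrapping property, and the geodesic initial-value theorem. Your version spells out steps the paper's one-line argument leaves implicit, namely the strict inward transversality $h(\mathbf v,\mathbf n_+)=-\Lambda(R_+)\dot r>0$ obtained from $J<R_+$, and the fact that $q$ selects the connected component through the launch point $x$.
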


Prop.\ref{prop:outer-source-completeness} implies that one can assign a measure for $\mathbb K_{\mathcal A}$ via the measure on $C_+^{\mathrm{in}}$. We first define the source multiplicity by
\begin{equation}
	N_+(\gamma)\equiv \#q^{-1}(\gamma).
	\label{eq:source-multiplicity}
\end{equation}
Geometrically, $N_+(\gamma)$ counts the number of endpoints of $\gamma$ on $\S_{+}$. We equip $C_+^{\mathrm{in}}$ with the intrinsic measure \eqref{eq:intrinsic}, and the measure $\dd\nu_{\mathcal A}$ on $\mathbb K_{\mathcal A}$ is determined by the following equation:
\begin{equation}
	\int_{\mathbb K_{\mathcal A}}F(\gamma)\,
	\dd\nu_{\mathcal A}(\gamma)
	=\int_{C_+^{\mathrm{in}}}\frac{F(q(z))}{N_+(q(z))}\,
	\dd\Gamma_+(z),
	\label{eq:quotient}
\end{equation}
i.e., an inner-outer chord (one source intersection) keeps its weight,
while the weight of an outer--outer chord (two) is halved.  Equivalently,
\begin{equation}
	q_*\dd\Gamma_+=N_+(\gamma)\,
	\dd\nu_{\mathcal A}(\gamma),
	\label{eq:push-multiplicity}
\end{equation}
where $q_*$ denotes the push-forward of measures.\footnote{For a
	measurable map $f:X\to Y$ and a measure $\mu$ on $X$, the push-forward
	$f_*\mu$ is the image measure on $Y$ defined by
	$(f_*\mu)(E)=\mu(f^{-1}(E))$ for every measurable $E\subset Y$.}
Without this quotient, the outer-outer sector would carry twice the intended weight. Note that, since $\dd\nu_{\mathcal A}(\gamma)$ is attached to the chords themselves rather than to any particular surface, the number of geodesics, or equivalently their flux, is automatically conserved along the geodesic flow, i.e. there is no source or sink for the geodesics inside $\mathcal{A}$. We will give a more rigorous argument for this property in section \ref{sec:liouville transport}.

We end with some comments about the intrinsic measure on the outer sphere in \eqref{eq:quotient}. On $\S_{+}$, let $\alpha_+$ be the angle with the inward normal and $\psi_+$ the azimuth around the radial axis. The intrinsic measure~\eqref{eq:intrinsic} becomes
\begin{equation}
 \dd\Gamma_+
 =\cos\alpha_+\,\dd A_+\,
 \sin\alpha_+\dd\alpha_+\dd\psi_+.
 \label{eq:source-alpha}
\end{equation}
Using \eqref{eq:outer-inward-normal} and the radial component of the inward unit tangent vector $\mathbf v = \dot r\,\partial_r +\dots$:
\begin{equation}
	\cos\alpha_+
	=\abs{\mathbf v|_{r=R_+}\cdot\mathbf n_+}
	=\sqrt{1-J^2/R_+^2},
	\quad \Rightarrow\quad J=R_+\sin\alpha_+,
\end{equation}
and $\dd A_+=R_+^2\dd\Omega_x$ (where $\dd\Omega_x$ is the standard solid-angle measure on the unit position two-sphere, normalized by $\int_{\S^{2}}\dd\Omega_x=4\pi$), and $J\dd J=R_+^2\sin\alpha_+\cos\alpha_+\dd\alpha_+$, we obtain
\begin{equation}
 \dd\Gamma_+=J\,\dd J\,\dd\psi_+\,\dd\Omega_x,
 \label{eq:source-J}
\end{equation}
and integrating over the launch point and azimuth gives
\begin{equation}
 \dd\Gamma_+^{(J)}=8\pi^2J\,\dd J.
 \label{eq:source-total}
\end{equation}
Integrating over all $J$ gives $4\pi^2R_+^2=\pi\,\Area(\S_{+})$, so the total flux through $\S_{+}$ reproduces its area up to the universal factor $\pi$. This is a simple example of the Crofton reconstruction. In the following, we will show that the $\S_{+}$-sourced geodesics can be used to reconstruct the area of any surface in the region between $\S_{+}$ and the horizon, again in a Crofton manner.

\subsection{Reconstruction for spherical surfaces}
\label{sec:transport}

In this subsection, we focus on the geodesics launched from the outer sphere $\S_{+}$ with the intrinsic intersection measure $\dd\Gamma_+$, and see whether an arbitrary sphere with constant radius can be reconstructed in a Crofton manner. We denote by $N_+$ and $N_r$ the numbers of intersections of a given geodesic with the source sphere $\S_{+}$ and with an intermediate sphere $\mathcal{S}_r$ of constant radius $r$ ($R_-<r<R_+$), respectively. The relevant multiplicities are\footnote{This classification assumes that $\Lambda(r)$ remains finite throughout the shell $[R_-,R_+]$.  When $\Lambda(r)<\infty$, the proper radial distance $\rho$, defined by $\dd\rho=\Lambda(r)\dd r$, is a strictly monotonic function of $r$ (since $\dd\rho/\dd r=\Lambda(r)>0$), so each value of $r$ labels a unique sphere and the geodesic sectors are unambiguously determined by $J$ alone.  If the shell were to include a point where $\Lambda(r)\to\infty$ (such as the Schwarzschild horizon $r=r_h$), then $r$ would attain a minimum in terms of $\rho$, the one-to-one correspondence between $r$ and spatial position would break down, and the sector classification above would fail.  This restriction will be lifted in Section~\ref{sec:bridge}, where the shell is extended past the Einstein--Rosen bridge throat.}
\begin{center}
\begin{tabular}{c c c c}
\toprule
Range of $J$ & Type & $N_+$ & $N_r$\\
\midrule
$0\le J<R_-$ & inner-outer & $1$ & $1$\\
$R_-<J<r$ & outer--outer & $2$ & $2$\\
$r<J<R_+$ & outer--outer & $2$ & $0$\\
\bottomrule
\end{tabular}
\end{center}
See also Fig.\ref{fig:chords}. It follows directly from \eqref{eq:quotient} and \eqref{eq:source-total} that the induced measure on $\mathcal{S}_r$ is
\begin{equation}
 \dd\Gamma_r^{\mathrm{ind},(J)}
 =\frac{N_r(J)}{N_+(J)}\,\dd\Gamma_+^{(J)}
 =8\pi^2J\,\dd J,
 \qquad 0<J<r.
 \label{eq:induced-J}
\end{equation}
The cancellation $N_r/N_+=1$ occurs separately in both sectors with non-zero $N_r$. Note that all geodesics with $0<J<r$ launched from the outer sphere $\S_{+}$ cover all the possible geodesics passing through the sphere $\mathcal{S}_r$.

The rotational invariance of the whole configuration and the symmetries of the measure $\dd\Gamma_+$ imply that, for fixed $J$, the measure over the intersection point and the azimuth is proportional to $\dd\Omega_x\dd\psi$. Writing it as $C(J)\dd J\dd\Omega_x\dd\psi$ and integrating over $\S^{2}\times \S^1$, comparison with Eq.~\eqref{eq:induced-J} fixes $C(J)=J$. Hence
\begin{equation}
 \dd\Gamma_r^{\mathrm{ind}}
 =J\dd J\dd\psi\dd\Omega_x,
 \label{eq:local-J}
\end{equation}
which is the induced measure for the intersections between $\mathcal{S}_r$ and the boundary-sourced geodesics.
Since $J=r\sin\alpha$, we find $\alpha$ covers $[0,\pi/2]$, hence the geodesics sourced from $\S_{+}$ cover all the geodesics that intersect $\mathcal{S}_r$. Using $\dd J=r\cos\alpha\dd\alpha$, and $\dd A_r=r^2\dd\Omega_x$, we find
\begin{equation}
 \dd\Gamma_r^{\mathrm{ind}}
 =\cos\alpha\,\dd A_r\,
 \sin\alpha\dd\alpha\dd\psi
 =\dd\Gamma_r^{\mathrm{intr}},
 \label{eq:sphereResult}
\end{equation}
which coincides with the intrinsic measure of $\mathcal{S}_r$.
The integrated check yields
\begin{equation}
 \int_0^r8\pi^2J\dd J=4\pi^2r^2
 =\pi\,\Area(\mathcal{S}_r),
 \label{eq:integrated-check}
\end{equation}
i.e. counting the intersections between $\mathcal{S}_r$ and the boundary-sourced geodesics reproduces the area of $\mathcal{S}_r$, exactly as prescribed by the Crofton formula. 

\subsection{Crofton reconstruction for a generic surface}\label{sec:crofton generic}

Next, we analyze the intersections between the boundary-sourced geodesics and an arbitrary surface in $\mathcal{A}$. Let us consider any point $P$ in $\mathcal A$ and define the line density $\rho_{\ell}(P,[\mathbf v])$ as the density in direction $[\mathbf v]$ per unit area transverse to $\mathbf v$. The dependence of $\rho_{\ell}(P,[\mathbf v])$ on $[\mathbf v]$ characterizes how the boundary-sourced geodesics passing through a given point $P$ are distributed over the various directions, while its dependence on $P$ encodes the spatial density of their crossing points. The line density function exists because the configuration of the boundary-sourced geodesics is entirely fixed by the measure of their intersections with $\S_{+}$. Furthermore, it does not depend on which surface one uses to discuss the point $P$. For an infinitesimal area element $\dd A_\Sigma$ at any $P$ with an arbitrary normal $\mathbf n$, the induced measure of intersections is\begin{equation}
	\dd\Gamma_{P,\mathbf n}^{\mathrm{ind}}
	=\rho_{\ell}(P,[\mathbf v])\,
	\abs{\mathbf v\cdot\mathbf n}
	\dd A_\Sigma\,\dd\Omega_{[\mathbf v]}.
	\label{eq:arbitrary-element1}
\end{equation}
If we can determine the function $\rho_{\ell}(P,[\mathbf v])$, we can count the number of intersections on any surface in $\mathcal{A}$.

The previous discussion for $\mathcal{S}_r$ tells us that the crossing measure on a radial area element $\dd A_r$ that contains $P$ is
\begin{equation}
 \dd\Gamma_{P,\mathbf e_{\hat r}}^{\mathrm{ind}}
 =\rho_{\ell}(P,[\mathbf v])
 \abs{\mathbf v\cdot\mathbf e_{\hat r}}
 \dd A_r\,\dd\Omega_{[\mathbf v]}.
 \label{eq:density-def}
\end{equation}
Comparison with Eq.~\eqref{eq:sphereResult} yields $\rho_{\ell}(P,[\mathbf v])=1$. It is crucial that Eq.~\eqref{eq:sphereResult} holds for an arbitrary intermediate sphere $\mathcal{S}_r$, hence we should have $\rho_{\ell}(P,[\mathbf v])=1$ at every crossing point $P\in\mathcal{A}$. It means that, at fixed $P$, the geodesics passing through that point are isotropically distributed in direction, and furthermore, the spatial density of the crossing points is uniform. Thus $\rho_{\ell}=1$ encodes both the directional and the spatial uniformity of the geodesic gas. Hence, for an area element $\dd A_\Sigma$ at $P$ with an arbitrary normal $\mathbf n$, the induced measure of intersections coincides with the intrinsic measure:
\begin{equation}
 \dd\Gamma_{P,\mathbf n}^{\mathrm{ind}}
 =\abs{\mathbf v\cdot\mathbf n}
 \dd A_\Sigma\,\dd\Omega_{[\mathbf v]}
 =\dd\Gamma_{P,\mathbf n}^{\mathrm{intr}}.
 \label{eq:arbitrary-element}
\end{equation}
This means, for any co-dimension-one surface $\Sigma$, its area can be computed by counting the number of intersections with the boundary sourced geodesics, 
\begin{equation}
	\Area(\Sigma)=\frac{1}{\pi}
	\int_{\mathbb K_{\mathcal A}}N_\Sigma(\gamma)\,
	\dd\nu_{\mathcal A}(\gamma),
	\label{eq:crofton}
\end{equation}
which is just the Crofton formula with the boundary sourced geodesics  playing the role of the kinematic space in vacuum AdS space. 

The equality in Eq.~\eqref{eq:arbitrary-element} is not obtained by redefining a measure on each probe surface: the configuration of geodesics is fixed once at $\S_{+}$, and the induced measure $\dd\Gamma_{P,\mathbf n}^{\mathrm{ind}}$ for their intersections with any area element in $\mathcal{A}$ is computed and found to equal the intrinsic measure $\dd\Gamma_{P,\mathbf n}^{\mathrm{intr}}$ of that area element. A remarkable feature of the above discussion is that it does not depend on the specific form of $\Lambda(r)$. The induced $J$-distribution on $\mathcal{S}_r$, Eq.~\eqref{eq:induced-J}, depends only on the boundary source distribution $\dd\Gamma_+^{(J)}=8\pi^2J\dd J$ and the multiplicity ratio $N_r(J)/N_+(J)$. Both the boundary source (via Eq.~\eqref{eq:source-J}) and the multiplicities depend only on the topology of geodesic sectors (whether a chord returns or crosses), not on the detailed metric function. In Section~\ref{sec:liouville transport} we will give a more rigorous discussion to show that the Crofton reconstruction is actually valid in more general geometric settings.

\subsection{Schwarzschild black hole versus vacuum AdS space}
\label{sec:homogeneous}

The Crofton construction presented for Schwarzschild space differs fundamentally from the kinematic-space formalism developed for homogeneous spaces, like the vacuum AdS space. 
\begin{itemize}
	\item \textit{ A manifold $\mathcal{M}$ is homogeneous if a Lie group $G$ acts transitively on it: for any two points $x,y\in\mathcal{M}$, there exists $g\in G$ such that $g\cdot x=y$.}
\end{itemize}
 Every point is geometrically equivalent to every other under the symmetry group. Examples include the Euclidean space $\mathbb R^n$ under translations, the sphere $S^n$ under $SO(n+1)$, and the hyperbolic space $\mathbb H^n$ under $SO(1,n)$. For such spaces, $G$ carries a unique invariant measure, the \emph{Haar measure} $\dd g$, satisfying $\dd(gh)=\dd(hg)=\dd g$ for all $h\in G$. Because the group acts transitively, the Haar measure pushes forward to an invariant measure on $\mathcal{M}$ itself and, by the same mechanism, to the space $\mathbb{K}_{\mathcal{M}}$ of all geodesics on $\mathcal{M}$. The measure on $\mathbb{K}_{\mathcal{M}}$ is simply the quotient of the Haar measure on $G$ by the stabilizer of a reference geodesic. No boundary sourcing, flow transport, or multiplicity quotient is needed, and the symmetries of the space automatically supply a canonical Crofton measure with unit line density everywhere \cite{Santalo,Helgason}.\footnote{In vacuum AdS, the kinematic space was shown to be a coadjoint orbit of the conformal group, whose Crofton form coincides with the Kirillov--Kostant symplectic form on that orbit \cite{PennaZukowski}.}

The time-symmetric Schwarzschild slice is \emph{not} a homogeneous space. The metric $\Lambda(r)$ depends explicitly on $r$, and points at different radii are geometrically distinct. The isometry group is only $SO(3)$, which acts transitively on each sphere $\mathcal{S}_r$ but does not move points between different radii. Consequently, its Haar measure cannot generate the full shell-chord measure by the homogeneous-space quotient described above. The metric nevertheless provides a Liouville measure and a transverse crossing measure, as shown in Section~\ref{sec:liouville transport}. Our construction represents that measure using a source at $\S_{+}$, transport by the geodesic flow, and a quotient that removes repeated source representations of the same chord. What must be established is the completeness of this particular boundary-generated representation, rather than the existence of local integral geometry on a non-homogeneous manifold.

\section{Crofton reconstruction on the Einstein--Rosen bridge}
\label{sec:bridge}

All preceding sections considered a shell with $R_->r_h$, i.e., entirely
outside of the Schwarzschild black hole.  We now extend
the Crofton construction across the horizon. Note that, inside a Schwarzschild black hole the roles of $t$ and $r$ interchange, so the
region $r<r_h$ is not a spatial slice of the exterior time coordinate and the
$t=\text{const}$ slice of the exterior cannot be continued into the black-hole interior.  We therefore study the time-symmetric slice of the maximally extended Schwarzschild spacetime, i.e. the Einstein--Rosen bridge, in analogy with the eternal black hole in AdS \cite{Maldacena:2001kr}.  On this slice $r\ge r_h$ everywhere, and the geometry consists of two isometric asymptotically flat regions joined through a throat at $r=r_h$. The inner boundary $\S_{-}$ of the shell is now placed on the second asymptotic side at an areal radius $R_->r_h$.  We ask whether the Crofton construction survives passage through the throat.

\subsection{Spatial geometry and geodesics on the bridge slice}

On the $t=\text{const}$ slice of the maximally extended Schwarzschild
spacetime, i.e. the Einstein--Rosen bridge, the spatial metric is
\begin{equation}
	\dd s^2 = \Bigl(1-\frac{r_h}{r}\Bigr)^{-1}\!\dd r^2 + r^2\dd\Omega_2^2,
	\qquad r\ge r_h\ \text{on both sides},
	\label{eq:bridge-t-const}
\end{equation}
where $r=r_h$ is the throat.  The geometry
continues smoothly across the throat to a second, isometric asymptotically
flat region.  In the proper radial coordinate $\rho$ defined by
$\dd\rho = \dd r/\sqrt{1-r_h/r}$, the metric reads
\begin{equation}
	\dd s^2 = \dd\rho^2 + r(\rho)^2\,\dd\Omega_2^2,
	\qquad
	\left(\frac{\dd r}{\dd\rho}\right)^2 = 1 - \frac{r_h}{r},
	\label{eq:bridge-proper}
\end{equation}
with the throat at $\rho=0$, $r(0)=r_h$.  Near the throat,
$r(\rho) = r_h + \rho^2/(4r_h) + \cO(\rho^4)$, so $r(\rho)$ is even in $\rho$
and attains its global minimum at the throat.  

The round sphere at coordinate $r$ has area $4\pi r^2$, so $r$ labels the transverse area.  The coordinate $\rho$, in contrast, is defined by $\dd\rho=\dd r/\sqrt{1-r_h/r}$, so that $\dd\rho$ is the invariant (proper) arc length along a radial geodesic and $\rho$ measures the true radial separation. We will refer to $r$ as the \emph{areal radius} and $\rho$ as the \emph{proper radial distance}. For a geodesic that crosses the throat, the areal radius $r$ is \emph{not monotonic} in arc length~$s$: it decreases from the source value $R_+$ on the right side down to its minimum $r_h$ at the throat, then \emph{increases} from $r_h$ to $R_-$ on the left side. More explicitly, see:
\begin{center}
	\begin{tabular}{c c c}
		\toprule
		& areal radius $r$ & proper radial distance $\rho$\\
		\midrule
		right asymptotic boundary & $r\to\infty$ & $\rho\to+\infty$\\
		throat & $r=r_h$ & $\rho=0$\\
		left asymptotic boundary & $r\to\infty$ & $\rho\to-\infty$\\
		\bottomrule
	\end{tabular}
\end{center}
This violates the premise of our discussion in Section~\ref{sec:transport} that the areal radius $r$ is monotonic in the proper radial distance $\rho$, i.e. that $r$ distinguishes nested spheres throughout the shell. We therefore re-examine the geodesic classification from first principles.

A unit-speed geodesic on the bridge obeys
\begin{equation}
	\dot\rho^2 + \frac{J^2}{r(\rho)^2} = 1,
	\qquad J = \abs{r^2\dot\phi} = \text{const},
	\label{eq:bridge-eom}
\end{equation}
with $r(\rho)$ given by Eq.~\eqref{eq:bridge-proper}.  Consider a geodesic
launched from the right outer sphere.  Since $\dot\rho^2 = 1 - J^2/r^2$, a
turning point requires $r=J$.  As $r(\rho)\ge r_h$ everywhere, geodesics with
$J>r_h$ can never reach the throat, for $\dot\rho^2$ would become negative at
$r<J$. They instead turn around at $r=J>r_h$ on the right side.  Geodesics with
$J<r_h$ have $\dot\rho^2>0$ everywhere, since $r(\rho)\ge r_h>J$ on \emph{both}
sides of the bridge, and so cross the throat and continue indefinitely into the
left asymptotic region.  The critical geodesic with $J=r_h$ has $\dot\rho^2=0$
at the throat and spirals onto it with logarithmically divergent affine length.
This set has zero Liouville measure and will be ignored.

In summary, there are exactly {two} types of geodesic sectors launched from
the right outer sphere $\S_{+}(R_+)$:
\begin{enumerate}
	\item \emph{Crossing geodesics} ($0 < J < r_h$).  These travel from
	$\S_{+}$ on the right side, pass through the throat, and emerge on the
	left side, reaching the left inner sphere $\S_{-}(R_-)$.
	They intersect $\S_{+}$ once, the throat once, and $\S_{-}$ once.
	Hence $N_+ = 1$, $N_{r_h} = 1$, $N_- = 1$.
	
	\item \emph{Right-returning geodesics} ($r_h < J < R_+$).
	These turn around at $r = J$ on the right side before reaching the
	throat, and return to $\S_{+}$.  Hence $N_+ = 2$, and they never
	intersect the left side.
\end{enumerate}
See Fig.\ref{fig:einstein_rosen_bridge}. The situation for geodesics launched from the left outer sphere is identical: there are crossing geodesics with $0<J<r_h$ and left-returning geodesics with $r_h<J<R_-$. The crossing sector launched from the left boundary is the same set of unoriented geodesics as that launched from the right boundary, while each side's returning sector is distinct. Combining the left- and right-sourced geodesics, their intersections with any surface in the Einstein--Rosen bridge reconstruct its area through the Crofton formula. The network of all these geodesics thus provides a perfect tessellation of the Einstein--Rosen bridge.

\begin{figure}[h]
	\centering
	\includegraphics[scale=1]{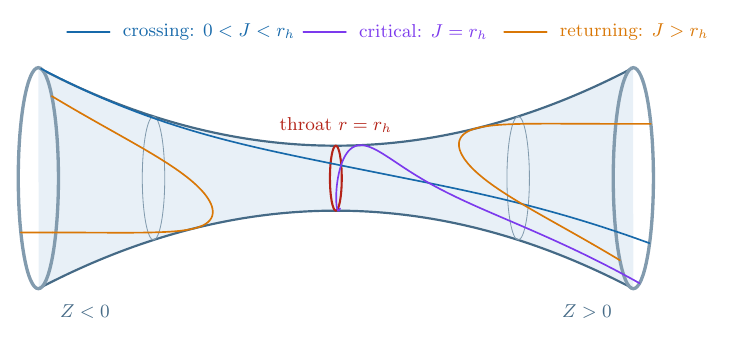}
	\caption{The Einstein--Rosen bridge, depicted as a Flamm paraboloid embedding. Geodesics launched from the right outer sphere $\S_{+}$ split into crossing geodesics (blue curves, $0<J<r_h$), which pass through the throat and reach the left side, and returning geodesics (orange curves, $r_h<J<R_+$), which turn around before the throat and return to $\S_{+}$. The sectors launched from the left outer sphere are mirror images of these. The purple critical geodesic with $J=r_h$ has zero measure.}	
	\label{fig:einstein_rosen_bridge}	
\end{figure}

The threads that cross the horizon are the crossing geodesics with $0<J<r_h$: launched from one asymptotic side, they pass through the throat and emerge on the other side, each intersecting the throat exactly once. The total measure of this crossing sector is
\begin{equation}
	\nu_{\mathcal A}(\cK_{\mathrm{cross}})
	=8\pi^2\int_0^{r_h}J\,\dd J
	=4\pi^2 r_h^{\,2},
	\label{eq:cross-nu}
\end{equation}
which, upon normalizing by ${1}/{\pi}$, reproduces the horizon area
$A_H\equiv \Area(\S_{r_h})=4\pi r_h^{\,2}$.

The interpretation of this horizon-area count in terms of entanglement entropy, Bekenstein--Hawking entropy, and ER$=$EPR is a physical statement that goes beyond the geometric reconstruction presented here. We defer it to Section~\ref{sec:pee-shell}.

\subsection{Failure of the Crofton formula and the need for bilateral sources}

We now ask: from the \emph{single right-side source} alone, what is the
transported measure on a sphere $\mathcal{S}_r$ located on the \emph{left} side
of the bridge, at areal radius $r > r_h$?

\begin{itemize}
	\item Crossing geodesics ($0 < J < r_h$): every one of these traverses
	the left side and intersects $\mathcal{S}_r$ exactly once.  $N_r = 1$.
	\item Right-returning geodesics ($J > r_h$): none of these cross the
	throat.  $N_r = 0$.
\end{itemize}
The induced measure for the right-side sourced geodesics on $\mathcal{S}_r$ is therefore
\begin{equation}
	\dd\Gamma_r^{\mathrm{ind},(J)}
	= \frac{1}{1}\; 8\pi^2 J\,\dd J
	= 8\pi^2 J\,\dd J,
	\qquad 0 < J < r_h,
	\label{eq:induced-left}
\end{equation}
and zero for $J \ge r_h$.  The total crossing is
\begin{equation}
	\int_0^{r_h} 8\pi^2 J\,\dd J
	= 4\pi^2 r_h^{\,2}
	= \pi\,A_H,
	\label{eq:total-weight-left}
\end{equation}
which equals $\pi A_H< \pi \Area(S_{r})$. So the right-sourced geodesics are not enough to reconstruct the area of $\mathcal{S}_r$, and the Crofton reconstruction based on the right-sourced geodesics stops at the horizon $r=r_h$.

At the throat itself, $r = r_h$, every crossing geodesic ($0 < J < r_h$)
intersects the sphere exactly once, and no other geodesics reach it.
The total chord count is $4\pi^2 r_h^{\,2}$, which by Eq.~\eqref{eq:total-weight-left}
does satisfy $\pi\,A_H$.  The Crofton formula therefore holds
\emph{at the throat}, even though it fails on all spheres strictly to the
left of it.

This is the first indication that the throat is a distinguished surface in
the Crofton geometry: the throat is a \emph{totally geodesic} submanifold:
its extrinsic curvature vanishes identically (not merely its trace),
so any geodesic tangent to the throat remains confined to it.
In the regular proper coordinate $\rho$, its unit normals are simply
$\pm\partial_\rho$. Reversing the normal changes the sign of the oriented
crossing form, but leaves the positive unoriented measure unchanged.
The intrinsic crossing measure therefore remains well-defined at the
throat. It is the radial coordinate $r$, rather than that measure, which
degenerates there. The sector classification and Liouville transport
explain why the crossing sector supplies the full area. The Schwarzschild
relation $r_h=2Gm$ then gives $A_H=16\pi G^2m^2$.

The failure of the Crofton formula on the left side has a clear physical
origin: the right-side source cannot supply the returning sector for the
left asymptotic region, because all geodesics with sufficiently large $J$
turn around before crossing the throat.  The resolution is structural:

\begin{itemize}
	\item For a complete Crofton reconstruction on \emph{both} sides of the
	bridge, one requires boundary sources on \emph{both} asymptotic
	spheres.  Each source generates its own gas, consisting of crossing
	chords (which traverse the throat) and returning chords (which
	remain on their own side).
	\item On the left side, the left boundary source at $r =
	R_+^{\mathrm{(L)}}$ fills in the missing $J$-range $r_h < J < r$
	through its own returning sector.  The left and right gases
	overlap in their crossing sectors (both sides' $0 < J < r_h$ chords
	are, up to direction reversal, the same set of geodesics), but
	each side's returning sector is private.
	\item The two reconstructions are not fully independent: they share the
	crossing sector, whose total unoriented measure is
	$4\pi^2 r_h^{\,2} = \pi A_H$. 
\end{itemize}

\section{Liouville transport and Crofton reconstruction on more generic Riemannian manifolds}
\label{sec:liouville transport}

In this section we give a formal discussion of how the source measure on a surface is transported, along the geodesic flow, to the measure of its intersections with another surface. The argument proceeds as follows.
\begin{itemize}
	\item We first recall the Hamiltonian structures underlying geodesic flow and prove the Liouville invariance.
	
	\item We then restrict the invariant measure to the unit-speed surface and remove the direction along the flow to get the measure of intersections on a surface, and study its transport along the geodesic flow. 
	
	\item Finally, we apply this general construction to the shell source measure on $\S_{+}$ we considered in section \ref{sec:2}. The surjectivity of the chord projection $q$, together with the source-multiplicity quotient, then gives the crossing measure on an arbitrary surface in $\mathcal A$.
\end{itemize}

We emphasize that the constructions in the first two steps are valid for any smooth Riemannian manifold. Consequently, the measure-transport argument can be applied to spatial geometries beyond the asymptotically flat Schwarzschild spacetime whenever the required reachability and multiplicity conditions hold.

\subsection{Hamiltonian phase space and Liouville invariance}\label{sec:phase space}

Let $\M$ be an $n$-dimensional smooth Riemannian manifold with metric $h$, whose phase space is $T^*\M$ with canonical coordinates $\xi=(x,p)=(x^i,p_i)$.  The canonical one-form, the symplectic form, and the Liouville measure are
\begin{equation}
	\vartheta=p_i\,{\dd}x^i,
	\quad
	\omega=\dd\vartheta
	={\dd}p_i\wedge{\dd}x^i,
	\quad
	\dd\mu=\frac{1}{n!}\omega^{\wedge n}.
	\label{eq:canonical phase space structures}
\end{equation}
The last equality fixes the orientation convention used below.  Reversing this convention changes only the sign of oriented forms, not the associated positive measures.

For a Hamiltonian $H$ of a generic system, we define the Hamiltonian vector field $X_H$ on phase space via
\begin{equation}
	\iota_{X_H}\omega=-{\dd}H,
	\label{eq:Hamiltonian vector field definition}
\end{equation}
where $\iota_X\omega$ denotes the contraction of the first slot of the symplectic form $\omega$ with the vector field $X$, namely $(\iota_X\omega)(Y)=\omega(X,Y)=\omega_{ij}X^iY^j$.  Since $\{\partial_{x^i},\partial_{p_i}\}$ is a coordinate basis for the phase space, $X_H$ has a unique local expansion
\begin{equation}
	X_H=A^i\partial_{x^i}+B_i\partial_{p_i},
\end{equation}
where $A^i$ and $B_i$ can be deduced by substitution into the defining equation \eqref{eq:Hamiltonian vector field definition}:
\begin{equation}
	\iota_{X_H}\omega=B_i\,{\dd}x^i
	-A^i\,{\dd}p_i
	=-\frac{\partial H}{\partial x^i}{\dd}x^i
	-\frac{\partial H}{\partial p_i}{\dd}p_i,
\end{equation}
hence
\begin{equation}
		X_H
	=\frac{\partial H}{\partial p_i}\partial_{x^i}
	-\frac{\partial H}{\partial x^i}\partial_{p_i}.
	\label{eq:Hamilton equations on phase space}
\end{equation}

The Hamiltonian vector field $X_H$ generates the local Hamiltonian flow $\varphi_H^s$ on $T^*\M$.  For every initial point $\xi\in T^*\M$, and for all parameter values for which the solution remains in the chosen open phase-space domain, we have
\begin{equation}
	\frac{\dd}{{\dd}s}\varphi_H^s(\xi)=X_H\bigl(\varphi_H^s(\xi)\bigr),
	\quad
	\varphi_H^0(\xi)=\xi,
	\quad
	\varphi_H^{s+t}=\varphi_H^s\circ \varphi_H^t,
	\label{eq:Hamiltonian flow definition}
\end{equation}
and the corresponding Lie derivative is
\begin{equation}
	\mathscr{L}_{X_H}\eta
	\equiv \left.\frac{\dd}{{\dd}s}\right|_{s=0}(\varphi_H^s)^*\eta,
	\label{eq:Lie derivative from Hamiltonian flow}
\end{equation}
for every differential form $\eta$, where the superscript $*$ denotes pullback.  Consequently, $\mathscr{L}_{X_H}\eta=0$ is equivalent to invariance $(\varphi_H^s)^*\eta=\eta$ along the flow. For a scalar function $\mathfrak f$, the pullback means composition, and hence
\begin{equation}
	(\mathscr{L}_{X_H}\mathfrak f)(\xi)=\left.\frac{\dd}{{\dd}s}\right|_{s=0}\left[(\varphi_H^s)^*\mathfrak f\right](\xi)=\left.\frac{\dd}{{\dd}s}\right|_{s=0}\mathfrak f\bigl(\varphi_H^s(\xi)\bigr)=X_H(\mathfrak f)(\xi)
	.
\end{equation}
This implies that the Hamiltonian itself is invariant under the Hamiltonian flow as expected:
\begin{equation}
	\mathscr{L}_{X_H}H=X_H(H)=\frac{\partial H}{\partial p_i}\partial_{x^i}
	H-\frac{\partial H}{\partial x^i}\partial_{p_i}H=0\quad \Rightarrow \quad H\circ \varphi_H^s=H.
	\label{eq:Hamiltonian conservation}
\end{equation}
There are other important quantities which are invariant along the flow. We first recall Cartan's formula and the Leibniz rule for a vector field $X$ and differential forms $\eta,\zeta$:
\begin{equation}
	\mathscr{L}_X\eta
	=\dd(\iota_X\eta)+\iota_X(\dd\eta),
	\label{eq:Cartan}
\end{equation}
\begin{equation}
	\mathscr{L}_X(\eta\wedge\zeta)
	=(\mathscr{L}_X\eta)\wedge\zeta
	+\eta\wedge\mathscr{L}_X\zeta.
	\label{eq:Leibniz}
\end{equation}
These identities imply that the closed symplectic form, $\dd\omega=0$, is invariant:
\begin{equation}
	\mathscr{L}_{X_H}\omega
	=\dd(\iota_{X_H}\omega)
	+\iota_{X_H}(\dd\omega)
	=\dd(-{\dd}H)+0=0.
\end{equation}
Hence the Liouville measure is invariant by the Leibniz rule:
\begin{equation}\label{eq:Liouville thm}
	\mathscr{L}_{X_H}\dd\mu
	=\frac{1}{n!}\mathscr{L}_{X_H}\left(\omega^{\wedge n}\right)
	=\frac{1}{(n-1)!}(\mathscr{L}_{X_H}\omega)\wedge\omega^{\wedge(n-1)}
	=0,
\end{equation}
which is Liouville's theorem in the present sign convention.  This conclusion follows from the Hamiltonian structure and does not depend on the explicit form of the metric $h$.

\subsection{Transporting the measure along geodesics}\label{sec:intrinsic measure}

The conservation law \eqref{eq:Hamiltonian conservation} automatically holds for the surface with constant $H$, such that the flow generated by $X_{H}$ is confined on this surface. Now consider the geodesic Hamiltonian
\begin{equation}
	H(x,p)=\frac{1}{2}h^{ij}p_{i}p_{j}=\frac{1}{2}h_{ij}\dot{x}^{i}\dot{x}^{j},
\end{equation}
which is obtained from the geodesic Lagrangian $L=\frac{1}{2}h_{ij}\dot{x}^{i}\dot{x}^{j}$ through the Legendre relation $p_i=h_{ij}\dot{x}^j$, while the corresponding Euler--Lagrange equations are the geodesic equations.  Fixing the value of $H$ fixes the multiplicative normalization of an affine parameter, although an additive shift remains free. For example, when choosing the affine parameter to be the arclength $s$, defined by ${\dd}s^2=h_{ij}{\dd}x^i{\dd}x^j$, where $x^i$ are the spatial coordinate functions, we have
\begin{equation}
	H(x,p)=\frac12h^{ij}p_ip_j
	=\frac{1}{2}h_{ij}
	\frac{{\dd}x^i}{{\dd}s}
	\frac{{\dd}x^j}{{\dd}s}=\frac12,
\end{equation}
which we call the unit-speed condition. We thus call the surface with $H=1/2$ in the phase space the unit-speed surface:
\begin{equation}
	\Sbb_{H=1/2}
	=\left\{(x,p)\in T^*\M:
	H(x,p)=\frac12\right\}.
\end{equation}

We now restrict the Liouville measure to this unit-speed surface, which fixes the overall speed normalization, and then remove the direction along the flow to obtain the crossing measure on a transverse surface. Here and below, a geodesic intersects a surface $\Sigma$ transversely at some point when its tangent vector $\mathbf v$ is not tangent to $\Sigma$, or equivalently $h(\mathbf v,\mathbf n_\Sigma)\neq 0$ for a unit normal $\mathbf n_\Sigma$.  The excluded case $h(\mathbf v,\mathbf n_\Sigma)=0$ is a tangent intersection.

Let $F$ be any test function on the unit-speed surface, and let $\delta$ denote the Dirac distribution. The induced measure $\dd\lambda$ on $\Sbb_{H=1/2}$ is defined by
\begin{equation}
	\int_{\Sbb_{H=1/2}}F\,\dd\lambda
	\equiv \int_{T^*\M}
	F\,\delta\left(H-\frac12\right)\dd\mu,
	\label{eq:microcanonical definition}
\end{equation}
where only the restriction of $F$ to $\Sbb_{H=1/2}$ contributes.\footnote{Strictly speaking, the $F$ on the right-hand side denotes any smooth ambient extension $\widetilde F$ of the test function from $\Sbb_{H=1/2}$ to a neighborhood of that surface in $T^*\M$.  The result is independent of the chosen extension because the Dirac distribution is supported on $H=1/2$.} Using \eqref{eq:Hamiltonian conservation} and 	\eqref{eq:Liouville thm}, and the product rule, we find that the distributional representative is invariant:
\begin{equation}
	\begin{aligned}
		\mathscr{L}_{X_H}\left[\delta\left(H-\frac12\right)\dd\mu\right]
		&=\mathscr{L}_{X_H}\left[\delta\left(H-\frac12\right)\right]\dd\mu
		+\delta\left(H-\frac12\right)\mathscr{L}_{X_H}\dd\mu\\
		&=\delta'\left(H-\frac12\right)X_H(H)\dd\mu
		+\delta\left(H-\frac12\right)\mathscr{L}_{X_H}\dd\mu\\
		&=0.
	\end{aligned}
\end{equation}
Thus the induced measure $\dd\lambda$ is invariant under the Hamiltonian flow $\varphi_H^s$ on the unit-speed surface $\Sbb_{H=1/2}$:
\begin{equation}
	\mathscr{L}_{X_H}\dd\lambda=0,
	\label{eq:Liouville invariance dlambda}
\end{equation}
or equivalently, $(\varphi_H^s)^*\dd\lambda=\dd\lambda$. 

We next determine the local form of $\dd\lambda$ and remove the arclength direction to obtain a transverse measure over a surface in the base manifold $\M$. Let
\begin{equation}
	\pi:\Sbb_{H=1/2}\longrightarrow\M,
	\quad
	\pi(x,p)=x,
\end{equation}
be the bundle projection.  For a smooth embedded two-sided surface $\Sigma\subset\M$, the oriented phase-space crossing section is
\begin{equation}
	C_\Sigma
	\equiv \left\{(x,p)\in\Sbb_{H=1/2}:
	x\in\Sigma,\
	h(\mathbf v,\mathbf n_\Sigma)\neq0,\
	v^i=h^{ij}p_j\right\},
	\label{eq:phase-space section over Sigma}
\end{equation}
which encodes the local information of all unit vectors that intersect $\Sigma$ transversely. Equivalently, $C_\Sigma$ is the transverse part of $\pi^{-1}(\Sigma)\subset\Sbb_{H=1/2}$. 

We now specialize to three dimensions, which is the spatial dimension relevant to this paper, and the construction generalizes directly to higher dimensions. Fix $P\in\Sigma$ and choose local orthonormal coordinates $(y^1,y^2,z)$ near $P$ adapted to $\Sigma$, with $z$ the signed proper distance in the direction $\mathbf n_\Sigma$\footnote{Here ``signed'' means that $z>0$ in the direction of $\mathbf n_\Sigma$ and $z<0$ in the opposite direction.  } and the coordinate frame orthonormal at $P$.  Let ${\dd}V_h$ denote the volume form and $\dd A_\Sigma$ the local area measure at $P$, and write the covector momentum in the associated orthonormal coframe as $p=\bar p_a e^a$.  The Liouville measure in \eqref{eq:canonical phase space structures} at $P$ takes the local form
\begin{equation}
	\dd\mu
	={\dd}V_h\wedge\dd^{3}\bar p
	={\dd}z\wedge\dd A_\Sigma
	\wedge\dd^{3}\bar p,
\end{equation}
where $\dd^{3}\bar p=\dd\bar p_1\wedge\dd\bar p_2\wedge\dd\bar p_3$.  Write $\kappa=|\bar p|$ and $\mathbf v=\frac{\bar p}{|\bar p|}\in \S^{2}$. Let $\dd\Omega_{\v{v}}$ denote the standard solid-angle form of the oriented unit vector $\bar p/\kappa$.  Since $\dd^{3}\bar p=\kappa^2\dd\kappa\wedge\dd\Omega_{\v{v}}$ at $P$, the definition \eqref{eq:microcanonical definition} of the induced measure gives
\begin{equation}
	\begin{aligned}
		\dd\lambda
		&={\dd}z\wedge\dd A_\Sigma
		\wedge\dd\Omega_{\v{v}}
		\int_0^\infty
		\delta\left(\frac{\kappa^2-1}{2}\right)\kappa^2\dd\kappa\\
		&={\dd}z\wedge\dd A_\Sigma
		\wedge\dd\Omega_{\v{v}}.
	\end{aligned}
\end{equation}
Here we used $\delta(\zeta(\kappa))=\sum_k\delta(\kappa-\kappa_k)/|\zeta'(\kappa_k)|$ for the simple zeros $\kappa_k$ of $\zeta$.

The induced measure $\dd\lambda$ still contains the arclength direction along a geodesic-flow orbit. For an infinitesimal flow displacement ${\dd}s$, the signed normal displacement obeys
\[
 \dd z=h(\mathbf v,\mathbf n_\Sigma)\dd s
       =\cos\alpha_\Sigma\,\dd s.
\]
We therefore define the oriented transverse measure
\begin{equation}
	\beta\equiv \iota_{X_H}\dd\lambda.
	\label{eq:oriented transverse measure}
\end{equation}
This measure is transverse, i.e., $\iota_{X_H}\beta=0$ due to $\iota_{X_H}^2=0$, and its restriction to a phase-space crossing section contains the signed factor $h(\mathbf v,\mathbf n_\Sigma)$. Defining $\beta$ from $\dd\lambda$ is equivalent to quotient out the arclength. Indeed, $X_H(z)={\dd}z/{\dd}s=h(\mathbf v,\mathbf n_\Sigma)$, so the pullback of the contraction to the oriented phase-space crossing section $C_\Sigma$ is
\begin{equation}
	\left.\beta\right|_{C_\Sigma}=h(\mathbf v,\mathbf n_\Sigma)\,
	\dd A_\Sigma\wedge\dd\Omega_{\v{v}}.
\end{equation}
Thus $\beta$ is an oriented transverse measure whose pullback gives a signed crossing density. The unoriented crossing measure needed is the positive measure associated with its absolute density.  Equivalently, one may choose $h(\mathbf v,\mathbf n_\Sigma)>0$ or take the absolute density on the quotient $[\mathbf v]\in\mathbb{RP}^2=\Sbb^2/(\mathbf v\sim-\mathbf v)$.  Hence
\begin{equation}\label{eq:intrinsic measure}
	\dd\Gamma_\Sigma
	=\abs{\beta}_{[\v{v}]}=\abs{h(\mathbf v,\mathbf n_\Sigma)}\,
	\dd A_\Sigma\,\dd\Omega_{[\v{v}]}
	=\abs{\cos\alpha_\Sigma}\,
	\dd A_\Sigma\,\dd\Omega_{[\v{v}]},
\end{equation}
which is the intrinsic measure, also called the crossing measure, for the geodesics crossing $\Sigma$.  Here $\dd\Omega_{[\v{v}]}$ is the standard positive measure on $\mathbb{RP}^2$. Using a hemispherical representative, its normalization is
\begin{equation}\label{eq:uniform normalization}
	\int_{\mathbb{RP}^2}\dd\Gamma_\Sigma
	=\dd A_\Sigma\int_{\mathbb{RP}^2}
	\abs{\cos\alpha_\Sigma}\,\dd\Omega_{[\v{v}]}
	=\pi\,\dd A_\Sigma.
\end{equation}
The last equality is the standard hemisphere integral. Consequently, the direction-integrated crossing density per unit area is the universal constant $\pi$.

Now consider two transverse phase-space crossing sections $C_{1,2}\subset T^*\M$ with $\pi(C_{1,2})=\Sigma_{1,2}$, and let $\tau(\xi)$ denote the travel arclength required for the geodesic-flow orbit starting at $\xi\in C_1$ to reach a chosen intersection with $C_2$.  The function $\tau$ need not be defined on all of $C_1$, since some geodesic-flow orbits starting in $C_1$ may not reach $C_2$.  However, we can always choose sufficiently small open subsets $C'_{1,2}\subset C_{1,2}$ on which $\tau$ is smoothly well-defined.  We therefore define the transporting map $P_{12}:C'_1\to C'_2$ by
\begin{equation}\label{eq:transporting map}
	P_{12}(\xi)=\varphi_H^{\tau(\xi)}(\xi),\quad \xi\in C'_1.
\end{equation}
In the base manifold $\M$, this map corresponds to a unique geodesic segment from $\pi(\xi)\in\Sigma_{1}$ to $\pi(P_{12}(\xi))\in\Sigma_{2}$.  The segment is not tangent to either $\Sigma_{1}$ or $\Sigma_{2}$ because both phase-space crossing sections are assumed transverse. The main result of this subsection is a transport theorem, stated and proved in Appendix~\ref{sec:app-transport}: under the hypotheses specified there, the oriented transverse measure $\beta$ is invariant under the transporting map $P_{12}(\xi)=\varphi_H^{\tau(\xi)}(\xi)$, $\xi\in C'_1$,
\begin{equation}\label{eq:transporting map 1}
	P_{12}^*\left(\left.\beta\right|_{C_2'}\right)
	=\left.\beta\right|_{C_1'}.
\end{equation}

We now discuss the geometric meaning of \eqref{eq:transporting map 1}.  It states that the oriented transverse measure is preserved on every smooth geodesic-flow map connecting two transverse crossing sections.  Since the arclength direction has already been removed in $\beta$, varying the travel arclength across the branch does not change this measure. Furthermore, it implies that the intrinsic measure of the form \eqref{eq:intrinsic measure} is invariant along the geodesic flow, i.e. the transported measure initiated from \eqref{eq:intrinsic measure} preserves the same form after the transporting map \eqref{eq:transporting map}. It is useful to express the theorem in terms of the intersections. Let $\Sigma_{1,2}'=\pi(C_{1,2}')$ denote the open subsets of $\Sigma_{1,2}$ over which the crossing sections $C_{1,2}'$ lie. A geodesic gas launched from $C_1'$ over $\Sigma_1'$ propagates along the geodesics and arrives at $C_2'$ over $\Sigma_2'$. The induced crossing measure for the intersections on $\Sigma_2'$ still takes the form of the intrinsic measure. 
	
To check so, we first define the intrinsic measures on $C'_{1}$ as follows
\begin{equation}\label{eq:intrinsic form 1}
	\left.\beta\right|_{C'_1}=\cos\alpha_1\,
	\dd A_{\Sigma_1}\,\dd\Omega_{\v{v}_1},
	\quad \dd\Gamma_1
	=\abs{\left.\beta\right|_{C'_1}}_{[\v{v}_1]}=\abs{\cos\alpha_1}\,
	\dd A_{\Sigma_1}\,\dd\Omega_{[\v{v}_1]}.
\end{equation}
Here $\sigma_1$ denotes local coordinates on the spatial surface $\Sigma_1$,
and $\Omega_{\v{v}_1}(\alpha_1,\psi_1)$ are local spherical coordinates for an oriented
unit direction at $\Sigma_1$. We label the coordinates on $C'_{2}$ by replacing the subscript 1 by 2 in the above notations, and assume the two chosen sections are connected via \eqref{eq:transporting map}. We further use $\mathbf u_1$ and $\mathbf u_2$ to collectively denote the local coordinates on $C'_1$ and $C'_2$ respectively, such that \eqref{eq:transporting map} gives
\begin{equation}
	\mathbf u_2=P_{12}(\mathbf u_1)\quad \text{for}\quad \mathbf u_2=(\sigma_2,\alpha_2,\psi_2)\text{ on }C_2',\quad \mathbf u_1=(\sigma_1,\alpha_1,\psi_1)\text{ on }C_1'.
\end{equation}
Recall that pullback is the ``active'' viewpoint of the coordinate transformations (i.e. the ``passive'' viewpoint), so \eqref{eq:transporting map 1} has an equivalent expression: the components of $\left.\beta\right|_{C'_2}$ being equal to the components of $\left.\beta\right|_{C'_1}$ multiplied by $\mathcal J_{12}^{-1}$, where $\mathcal J_{12}\equiv 
\abs{\det\!\left(\frac{\partial \mathbf u_2}{\partial \mathbf u_1}\right)}$ is the Jacobian between $\mathbf u_1$ and $\mathbf u_2$. Note here we do not define $\left.\beta\right|_{C'_2}$ to be the intrinsic measure of the form \eqref{eq:intrinsic form 1}. Instead, we would like to see whether multiplying \eqref{eq:intrinsic form 1} by the Jacobian $\mathcal J_{12}^{-1}$ can yield the same form on $C'_2$. Indeed, after some algebra one can eventually obtain \footnote{{One strategy is to first re-express \eqref{eq:intrinsic form 1} in terms of explicit coordinates in $\mathbf u_1$, which gives
	\begin{equation}\label{eq:rho1}
		\dd\Gamma_{1}=\mathcal D_1(\mathbf u_1){\dd}\mathbf u_1,\quad \mathcal D_1(\mathbf u_1)=\abs{\cos\alpha_1}\sqrt{\det h_{\Sigma_1}(\mathbf u_1)}\sin\alpha_1,
	\end{equation} 
	where $h_{\Sigma_1}$ is the induced metric on $\Sigma_1$. From the passive viewpoint of \eqref{eq:transporting map}, the transported measure on $C'_{2}$ is given by $\dd\Gamma_{2}^{\text{trans}}\equiv 
	\left[\mathcal D_1(\mathbf u_1)\mathcal J_{12}^{-1}\right]{\dd}\mathbf u_2$. To compute it, we next introduce two auxiliary coordinate systems $z_{1,2}$ on $C'_{1,2}$ with canonical momentum, which satisfies $\bar p_{i1}=\sin\alpha_i\cos\psi_i, \bar p_{i2}=\sin\alpha_i\sin\psi_i$, then
	\begin{align}
		\abs{\det\!\left(
			\frac{\partial(\bar p_{i1},\bar p_{i2})}
			{\partial(\alpha_i,\psi_i)}\right)}
		=\abs{\cos\alpha_i}\sin\alpha_i\quad \Rightarrow \quad \abs{\det\!\left(\frac{\partial z_i}{\partial \mathbf u_i}\right)}
		=\sqrt{\det h_{\Sigma_i}}\abs{\cos\alpha_i}\sin\alpha_i
		=\mathcal D_i(\mathbf u_i),\quad i=1,2
	\end{align}
	where $\mathcal D_2(\mathbf u_2)$ takes the same form of \eqref{eq:rho1} with $h_{\Sigma_2}$ being the induced metric on $\Sigma_2$. The Jacobian between the two canonical coordinates $z_{1,2}$ is $\abs{\det\!\left(\frac{\partial z_2}{\partial z_1}\right)}=1$ such that 
	\begin{align}
		1=\abs{\det\!\left(\frac{\partial z_2}{\partial z_1}\right)}=\abs{\det\!\left(\frac{\partial z_2}{\partial \mathbf u_2}\right)}
		\abs{\det\!\left(\frac{\partial \mathbf u_2}{\partial \mathbf u_1}\right)}
		\abs{\det\!\left(\frac{\partial \mathbf u_1}{\partial z_1}\right)}
		=\mathcal D_2(\mathbf u_2)\mathcal J_{12}\mathcal D_1(\mathbf u_1)^{-1}.
	\end{align}
	which gives $\mathcal J_{12}=\mathcal D_1(\mathbf u_1)/\mathcal D_2(\mathbf u_2)$. Now the transported measure can be directly computed:
	\begin{equation}
		\dd\Gamma_{2}^{\text{trans}}=
		\left[\mathcal D_1(\mathbf u_1)\mathcal J_{12}^{-1}\right]{\dd}\mathbf u_2
		=\mathcal D_2(\mathbf u_2){\dd}\mathbf u_2,
	\end{equation}
	which shares the same form as \eqref{eq:rho1}, hence giving the intrinsic form in \eqref{eq:intrinsic form 2} after a reverse re-expression.
}}
\begin{equation}\label{eq:intrinsic form 2}
	\dd\Gamma_2=\abs{\cos\alpha_2}\,
	\dd A_{\Sigma_2}\,\dd\Omega_{[\v{v}_2]}.
\end{equation}
On the other hand, uniform property of the density flux additionally requires complete covering of all directions, which will be established for the Schwarzschild shell in the next subsection. 

The proof of the transport theorem, which is somewhat technical, is given in Appendix~\ref{sec:app-transport}.

\subsection{Application to the Schwarzschild shell}

{Now let $\M$ be the Schwarzschild shell $\mathcal A$. For our purpose, we set $\pi(C_1)=S_{+}$ and $\pi(C_2)=\Sigma$ in theorem \ref{thm:transport}, where $\Sigma\subset\mathcal A$ is a smooth two-sided surface and $C_2'$ is a chosen transverse intersection subset reached by geodesics launched inward from $\S_{+}$.  The initial measure $\dd\Gamma_+$ in \eqref{eq:intrinsic} and~\eqref{eq:source-J} is the positive restriction of the transverse measure to the inward source section $C_+^{\mathrm{in}}$.  The theorem therefore transports $\dd\Gamma_+$ to the restriction of $\dd\Gamma_\Sigma$ on each reachable subset.

Recall a shell chord may have more than one source point associated with $\S_{+}$ and may intersect $\Sigma$ more than once, with each transverse intersection counted separately.  For every fixed intersection of a chord $\gamma$, there are $N_+(\gamma)$ source points representing the same unoriented chord, while the chord measure in \eqref{eq:quotient} assigns each source point the compensating weight $1/N_+(\gamma)$.  Hence the source multiplicity cancels after summing over the source points and local intersection branches, and every reachable transverse intersection is counted once.  At this stage, the conclusion concerns the reachable subset of $\Sigma$, while its uniform property has not yet been assumed.}

{It remains to show that every interior point of $\mathcal A$, together with every unoriented unit direction, is reachable from $\S_{+}$. The geodesic initial-value and nontrapping argument in Section~\ref{sec:boundary} shows that $\mathbb K_{\mathcal A}$ covers the shell, while Prop.~\ref{prop:outer-source-completeness} establishes that the chord projection is surjective:
\begin{equation}
	q\bigl(C_+^{\mathrm{in}}\bigr)=\mathbb K_{\mathcal A}.
\end{equation}
Therefore every shell chord can be generated from an inward source on $\S_{+}$. Combining these two statements, geodesics launched from $\S_{+}$ reach every interior point and every unoriented unit direction. Hence the normalization \eqref{eq:uniform normalization} can be applied to all directions at every point of $\Sigma$, giving the universal density flux $\pi\,\dd A_\Sigma$. We conclude that the chord measure on $\mathbb K_{\mathcal A}$ induces the same crossing measure on every smooth transverse surface in the shell and therefore that $\mathbb K_{\mathcal A}$ uniformly covers $\mathcal A$.}

\subsection{Extension beyond the Schwarzschild case}

The construction in this section separates into a local phase-space part and a global
covering part.  The local part in subsections \ref{sec:phase space} and
\ref{sec:intrinsic measure} proceeds as:
\begin{enumerate}
	\item The symplectic structure on $T^*\M$ gives Liouville invariance on the phase space.
	
	\item Restriction to $H=1/2$ followed by removing the arclength direction gives
	the oriented transverse measure $\beta$.
	
	\item Theorem \ref{thm:transport} preserves $\beta$ under the
	transporting map between the chosen open subsets $C'_1$ and $C'_2$ of transverse crossing sections. 
\end{enumerate}
Note that no specific intrinsic information of the background geometry (metric, etc) is presumed, which implies that the above steps can be generalized to a wider class of spacetimes beyond the Schwarzschild cases. 

On the other hand, the global passage from the source measure to the intersection measure on any surface $\Sigma$ in a target manifold $\mathcal{W}_{\Sigma}$ that reproduce the Crofton reconstruction for $\Sigma$ requires additional geometric information:
\begin{enumerate}
	\item \textit{Reachability condition}: The source geodesics must cover
	all unoriented directions at every point of $\mathcal{W}_{\Sigma}$.
	
	\item  \textit{Multiplicity condition}: The source multiplicity must be finite and removed by the source-multiplicity
	quotient.
\end{enumerate}  
Take the Schwarzschild shell for an example, the first condition is ensured by the fact the geodesics emanating from the outer sphere can cover the whole shell, and the second condition is ensured by the non-trapping property of the geodesics so that no internal geodesic which can wind around and cross the same surface with an infinite number of times. Under these conditions, transport to the reachable subsets and summation over the transverse intersections give the crossing measure
\eqref{eq:intrinsic measure} and the corresponding Crofton relation. 

To conclude, the Crofton reconstruction generalizes to a much broader class of Riemannian manifolds. A typical example is a nontrapping Riemannian region bounded either by a closed surface $\Sigma$, or by an open surface $\Sigma$ together with a totally geodesic submanifold. In both cases the geodesics shot from $\Sigma$ cover all unoriented directions at every point of $\mathcal{W}_{\Sigma}$. Along the intersections defined by the intrinsic measure $\dd\Gamma_\Sigma$, they therefore reconstruct the areas of all surfaces in the interior via the Crofton formula.

\section{PEE tensor network and holography beyond AdS}
\label{sec:pee-shell}

In the standard AdS/CFT correspondence, the field theory on the asymptotic boundary is equivalent to the quantum theory of gravity in the bulk AdS spacetime. An important aspect of this duality is the equivalence between the quantum entanglement structure of the boundary CFT and the spacetime geometry of the AdS gravity. This equivalence is concretely realized by the partial entanglement entropy (PEE) program \cite{Wen:2018whg,Wen:2019iyq,Wen:2020ech} for vacuum AdS space: it was shown in \cite{Lin:2023rxc} that the PEE structure of the vacuum state of the boundary CFT determines a configuration of geodesics, the PEE threads, in the dual AdS space, and these threads were subsequently identified in \cite{Lin:2024dho} with the geodesics of the kinematic space. The kinematic space, equipped with the kinematic measure, thus plays the role of the AdS metric, since it reconstructs the area of any codimension-one surface in AdS through the Crofton formula. Furthermore, refs.~\cite{WenXuZhong,BasuWen} showed that, by inserting quantum states on the network of PEE threads (or its restriction to subregions) to build specific tensor-network models, the RT formula of holographic entanglement entropy is exactly reproduced, since in this context the number of intersections equals the area of the RT surface.

In this paper we have argued that, for a more generic Riemannian manifold $\M$, the geodesics shot from the intrinsic measure of a codimension-one surface $\Sigma$ can cover all the geodesic chords of a subregion $\mathcal W_{\Sigma}$. These chords form an analog of the kinematic space $\mathbb{K}_{\mathcal W_{\Sigma}}$, through which the Crofton formula reconstructs the area of every surface inside $\mathcal W_{\Sigma}$. With this geometric structure at hand, the PEE tensor network previously defined in vacuum AdS space \cite{WenXuZhong,BasuWen} can be naturally extended to the reconstruction subregion $\mathcal W_{\Sigma}$ of a more generic manifold, establishing toy models of holographic duality between the boundary surface $\Sigma$ and the subregion $\mathcal W_{\Sigma}$. We also denote the geodesic chords in $\mathbb{K}_{\mathcal W_{\Sigma}}$ as the PEE threads. Although in these generalized cases we do not have the PEE structure for the quantum state on the boundary $\Sigma$, the measure for the PEE threads is determined by the intrinsic measure of the boundary surface $\Sigma$.

In this section we will explicitly construct the PEE tensor network models for the case of Schwarzschild exterior. Here $\mathcal A$ again denotes the one-sided exterior
shell of Eq.~\eqref{eq:shell}: both $\S_{-}$ and $\S_{+}$ lie in the same
Schwarzschild exterior. We first construct a pure state on
their union, then take $R_-\to r_h$ and $R_+\to\infty$ respectively. For the concrete model and its two limits, we specialize
throughout this section to the Schwarzschild metric in
Eq.~\eqref{eq:schwarzschild}, with black-hole mass $m$ and horizon radius
$r_h=2Gm$. We identify separately the steps that do not require this
specific metric. The resulting
model gives an explicit entropy calculation on a fixed background, while its
interpretation as a gravitational dual remains a hypothesis.

\subsection{The PEE tensor network restricted on the reconstruction region $\mathcal W_{\Sigma}$}
\label{sec:pee-network}
For self-consistence let us briefly review the definition of the PEE tensor networks. A PEE tensor network is a tensor network whose bonds strictly follow the PEE threads, namely the geodesic chords in $\mathbb K_{\mathcal W_{\Sigma}}$ that uniformly cover the reconstructed region $\mathcal W_{\Sigma}$. In the continuum, the threads emitted from a point of the boundary $\Sigma$ are described by a vector field whose norm gives the density of the threads and whose integral curves are the threads themselves. Superposing the fields of all source points yields a network that perfectly tessellates $\mathcal W_{\Sigma}$. The number of threads crossing a surface is the total flux of this field, counting unoriented crossings. This is the continuous analogue of an ordinary tensor network. In a discrete tensor network the entanglement entropy is given by the number of bonds cut by a surface, whereas in holography it is the area of the RT surface. By the Crofton formula, the number of intersections with the PEE network equals the area of a smooth surface, so a PEE tensor network reproduces the RT formula and can be regarded as a more faithful tensor-network model of holography.

\begin{figure}[h]
	\centering
	\includegraphics[width=0.5\linewidth]{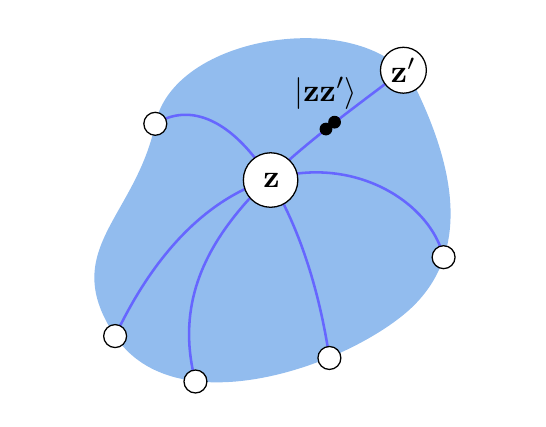}
	\caption{Extracted from \cite{WenXuZhong}. A tensor state $\ket{\mathcal{T}(\v z)}$ sits at the bulk site $\v z$ (hollow white dot). Its continuously distributed legs are described by a vector field (blue region), the blue lines showing representative threads. Adjacent tensors are glued by the maximally entangled state $\ket{\v z\v z'}$ along the PEE thread connecting them.}
	\label{fig:PEE network}
\end{figure}

At each bulk site $\v z$ one places a local tensor
\begin{equation}\label{eq:pee-vertex}
	\ket{\mathcal{T}(\v z)}=\mathcal{T}_{u_1u_2\cdots}(\v z)\ket{u_1}_{\v z}\ket{u_2}_{\v z}\cdots,
\end{equation}
where the index $u_i$ labels the legs carried by the threads through $\v z$ (in the continuum $i$ is a continuous parameter).  See Fig.~\ref{fig:PEE network} for illustration. Any two adjacent sites $\v z,\v z'$ are glued by projecting the corresponding pair of legs onto a maximally entangled state $\ket{\v z\v z'}$, so that the network represents a state on the uncontracted boundary legs on $\Sigma$,
\begin{equation}\label{eq:pee-network-state}
	\ket{\Psi_{\Sigma}}=\left(\bigotimes_{\v z,\v z'\in \mathcal W_{\Sigma}}\bra{\v z\v z'}\right)\left(\bigotimes_{\v z\in \mathcal W_{\Sigma}}\ket{\mathcal{T}(\v z)}\right).
\end{equation}

Depending on the tensors placed at the sites, one obtains different types of tensor network models. In this paper we only discuss the so-called factorized tensor network model and the random PEE tensor network. The simplest is the factorized model, in which each site carries a tensor product of EPR pairs, one pair for each thread passing through it,
\begin{equation}\label{eq:pee-factorized-vertex}
	\ket{\mathcal{T}(\v z)}=\bigotimes_{\{i,j\}}\mathcal{T}_{ab}(\v z)\ket{a}_{\v z i}\ket{b}_{\v z j},
\end{equation}
where $\{i,j\}$ labels a pair of legs on the same thread and $\mathcal{T}_{ab}(\v z)$ is an EPR pair, $\mathcal{T}^\dagger\mathcal{T}\propto\mathbb{I}$. Contracting the pairs along a thread iteratively leaves a single EPR pair on that thread, so the full network reduces to a state $\ket{\Psi_{\Sigma}}$ as the tensor product of EPR pairs on $\Sigma$, one for each PEE thread $\gamma\in\mathbb{K}_{\mathcal W_{\Sigma}}$. The two qudits of each pair reside at the two endpoints of the corresponding thread. Explicitly,
\begin{equation}\label{eq:pee-boundary-state}
	\ket{\Psi_{\Sigma}}=\bigotimes_{\gamma\in \mathbb{K}_{\mathcal W_{\Sigma}}}\ket{\mathrm{EPR}}_{\gamma}=\bigotimes_{\gamma\in \mathbb{K}_{\mathcal W_{\Sigma}}}T_{ab}(\gamma)\ket{a}\ket{b},
\end{equation}
where $T_{ab}(\gamma)\ket{a}\ket{b}$ represents the EPR pair on the two endpoints of $\gamma$. 

\begin{figure}[h]
	\centering
	\includegraphics[width=1.0\linewidth]{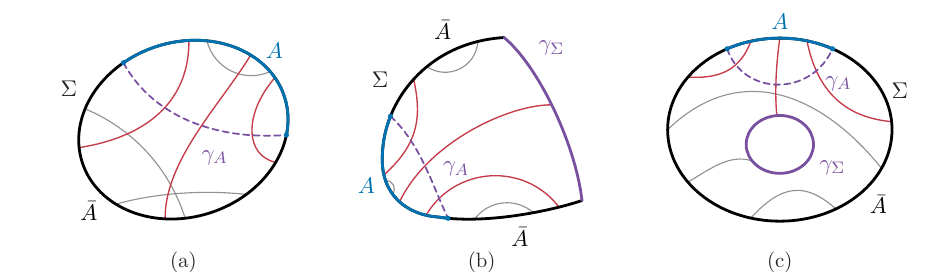}
	\caption{We show three examples of two-dimensional $\mathcal{W}_{\Sigma}$, which is the region bounded by $\Sigma$ and $\gamma_\Sigma$ (the purple solid line). We split $\Sigma$ into $A$ (the blue solid line) and $ \bar A$ (the black solid line). The number of PEE threads connecting $A$ and $\bar A\cup \gamma_\Sigma$ (red lines) is counted by the length of the geodesic $\gamma_A$ (dashed purple line) homologous to $A$, which each such thread crosses exactly once. The gray PEE threads represent the EPR pairs confined in $A$ or $\bar{A}$, which do not contribute to $S_A$.}
	\label{fig:RT-illustration}
\end{figure}

Then we give an explicit example where the RT formula is reproduced. Let us consider a curve $\Sigma$ together with its 2-dimensional reconstructed region $\mathcal W_{\Sigma}$, and split $\Sigma$ into two segments $A$ and $\bar A$. The $\mathcal W_{\Sigma}$ region should be nontrapping and is usually bounded by $\Sigma\cup \gamma_\Sigma$. Here $\gamma_\Sigma$ is a totally geodesic submanifold, and $\gamma_\Sigma=\emptyset$ if no such $\gamma_\Sigma$ exists. See Fig.~\ref{fig:RT-illustration} for examples. We compute the entanglement entropy of $A$ in the state $\ket{\Psi_{\Sigma}}$. Note that, when there exists $\gamma_\Sigma$ then $\Sigma\cup \gamma_\Sigma$ is in a pure state, while the state on $\Sigma$ is mixed. Following the analysis of \cite{WenXuZhong}, this entropy is obtained by counting the EPR pairs that span $A$ and $\bar A$, that is, the number (or flux) of PEE threads in $\mathbb K_{\mathcal W_{\Sigma}}$ connecting $A$ to $\bar A$, and multiplying by the entanglement carried by each pair. Let $\gamma_A$ be the geodesic in $\mathcal W_{\Sigma}$ homologous to $A$. Since $\gamma_A$ is a totally geodesic submanifold, every geodesic joining $A$ to $\bar A$ crosses it exactly once and no other geodesic meets it (Fig.~\ref{fig:RT-illustration}). Hence the area of $\gamma_A$ counts the number $\mathcal{N}(A,\bar A)$ of all the PEE threads in $\mathbb K_{\mathcal W_{\Sigma}}$ that connect $A$ and $\bar A$. To match the RT formula exactly, one assigns an entanglement $1/(4G)$ to each EPR pair (equivalently, takes each pair's entropy to be unity and rescales the thread density by an overall $1/(4G)$). Then
\begin{equation}
	S(A)=\frac{\Area(\gamma_A)}{4G}\,,
	\label{eq:pee-RT}
\end{equation}
which is precisely the RT formula.

For a single interval on the boundary $\Sigma$, the factorized model reproduces the RT formula because the homologous surface $\gamma_A$ is totally geodesic. This is an extremely special situation. For multi-interval configurations and for higher-dimensional manifolds, the RT surface of a general subsystem $A$ is no longer totally geodesic, and the EPR counting of the factorized model fails to reproduce the RT formula. The remedy is the \emph{random PEE tensor network} of Ref.~\cite{WenXuZhong}, in which each bulk site carries a random state while adjacent sites are still glued by maximal entanglement along the PEE threads. The essential features of this construction are the following.
\begin{itemize}
	\item \textit{Structure independence.} A random tensor network is defined on an arbitrary network and involves no geometric input: the network structure enters only through which bonds are cut. It can therefore be placed on the boundary-sourced PEE thread network of $\mathcal W_{\Sigma}$ as it stands.
	
	\item \textit{Minimal-cut entropy.} For random tensors, the entanglement entropy of $A$ is dominated, in the large bond-dimension limit, by the minimal cut \cite{HaydenEtAl}: it equals the minimal number of PEE threads that a surface homologous to $A$ must cross, with each thread carrying the weight $1/(4G)$.
	
	\item \textit{Minimal cut equals minimal surface.} Since the chord measure on $\mathcal W_{\Sigma}$ is uniform and isotropic, the number of threads cut by \emph{any} surface equals $\Area/(4G)$. The minimal cut and the minimal surface therefore coincide by construction, and no separate identification of the two is required.
\end{itemize}
Consequently the random PEE tensor network reproduces the RT formula, Eq.~\eqref{eq:pee-RT}, for a general subsystem $A$, including multi-interval and higher-dimensional cases, without requiring $\gamma_A$ to be totally geodesic. Assigning each bond the entanglement weight $1/(4G)$, and taking $1/(4G)\to\infty$, realizes precisely the large-bond-dimension limit in which the random tensor network reproduces the RT formula. Finally, only the uniformity of the chord measure on $\mathcal W_{\Sigma}$ is used in this argument, and no symmetry of the background is invoked. This is why the RT formula is realized here on a non-homogeneous, asymptotically flat geometry exactly as in vacuum AdS: the mechanism is the uniform tessellation, not the isometry group of the background.

Following the discussion of Ref.~\cite{BasuWen} in vacuum AdS space, the PEE tensor network restricted to $\mathcal W_{\Sigma}$ establishes a duality between the bulk region $\mathcal W_{\Sigma}$ and the boundary state $\ket{\Psi_{\Sigma}}$, providing a concrete realization of both the \emph{surface/state correspondence} proposed in \cite{Miyaji:2015yva,Miyaji:2015fia} and the \emph{generalized entanglement wedge} proposed for gravitating regions \cite{Bousso:2022hlz,Bousso:2023sya}. The present construction carries both over to more generic holographic backgrounds.

\subsection{Holographic toy model for Schwarzschild background}
Now we apply the PEE tensor network model of the previous subsection to the Schwarzschild black hole and the Einstein--Rosen bridge.

For the Schwarzschild black hole, we take the outer sphere $\S_{+}$ to infinity and push the inner sphere $\S_{-}$ to the horizon, so that the entire exterior region becomes the reconstruction region of $\S_{+}$. The PEE tensor network then covers this region, and $\S_{+}$ together with the horizon $\mathcal S_H$ supports a pure state. Since the horizon is a totally geodesic submanifold, the entanglement entropy of $\S_{+}$ is obtained by counting the PEE threads connecting $\S_{+}$ to the horizon and assigning each the entanglement weight $1/(4G)$,
\begin{equation}
	S(\S_{+})=\frac{1}{4G}\,\mathcal N(\S_{+},\mathcal S_H)=\frac{A_H}{4G},
	\label{eq:pee-bh-boundary}
\end{equation}
where $\mathcal N$ denotes the measure-weighted number of connecting threads and $A_H=4\pi r_h^{2}$ is the horizon area. This is the Bekenstein--Hawking entropy of the black hole \cite{Bekenstein,Hawking:1975vcx}. Equivalently, it is the entanglement entropy of the open legs on the sphere at infinity. Likewise, the entanglement entropy of a subregion $A$ on the boundary $\S_{+}$ is given by the RT formula of the previous subsection,
\begin{equation}
	S(A)=\frac{1}{4G}\,\mathcal N(A,\bar A)=\frac{\Area(\gamma_A)}{4G},
	\label{eq:pee-subregion}
\end{equation}
with $\gamma_A$ the surface homologous to $A$.

For the Einstein--Rosen bridge, we take the left and right boundaries to form a pure state, so that the whole bridge is the reconstruction region. The threads connecting the two boundaries are the crossing geodesics with $0<J<r_h$. Launched from one side, each passes through the throat and emerges on the other side, intersecting the throat exactly once. Their total measure, Eq.~\eqref{eq:cross-nu}, reproduces the throat area $A_H=4\pi r_h^{2}$. Assigning each crossing thread the entanglement weight $1/(4G)$ and using the factorization above, the entanglement entropy between the two boundary systems is
\begin{equation}
	S_{\mathrm{LR}}=\frac{A_H}{4G},
	\label{eq:pee-LR}
\end{equation}
i.e. the Bekenstein--Hawking entropy of the black hole \cite{Bekenstein,Hawking:1975vcx}. This is a concrete realization of ER$=$EPR \cite{MaldacenaSusskind,JensenKarch,Jiang:2024er} in the factorized model: the threads joining the two boundaries are the EPR pairs, while the throat they traverse is the Einstein--Rosen bridge. The fact that the number of such threads, and hence the bipartite entanglement entropy, is fixed by the throat area means that the geometric bridge and the quantum entanglement are two descriptions of the same structure.

Finally, we stress that the Schwarzschild holographic duality discussed here is restricted to the PEE tensor network model we have constructed, and it is not clear what the theory on the sphere $\S_{+}$ at infinity actually is. Within our framework, the boundary at infinity need not be a sphere. Our holographic model is somewhat analogous to AdS/CFT: for the Schwarzschild black hole, the dual field theory lives on a timelike boundary $\S_{+}\times\mathbb R$ (with $\mathbb R$ the time direction), rather than on the null boundary studied in much of the literature on flat holography and its holographic entanglement entropy \cite{Bagchi:2025vri,Nguyen:2025zhg,Ruzziconi:2026bix,Bagchi:2010zz,Bagchi:2012cy,Bagchi:2012xr,Barnich:2012aw,Barnich:2012rz,Bagchi:2014iea,Jiang:2017ecm,GhoshKrishnanSpi,Apolo:2020bld,Wen:2018mev,Laddha:2020hni}. In the existing literature, this places our construction closer to the spatial-infinity holographic picture (see, e.g., Refs.~\cite{LiTakayanagiFlat,GhoshKrishnanSpi,Krishnan:2019causal,Jorstad:2026flat,Henneaux:2018spatial,Henneaux:2019review}) than to the null-boundary proposals. The deeper connections and differences between our holographic model and null-boundary holographic models remain to be explored.

\section{Discussion}
\label{sec:conclusion}

We have constructed a self-contained chain of reasoning that connects a boundary-generated geodesic gas to the geometry of a spherically symmetric three-dimensional manifold. In Section~\ref{sec:2} we developed the Crofton reconstruction for a general class of such manifolds, including the time slice of the Schwarzschild black hole, which is the case of particular interest. The key innovations are:

\begin{enumerate}
 \item \emph{Transport.} The boundary Liouville flux, after the multiplicity quotient, induces the correct intrinsic Crofton measure on every interior surface. The proof uses only spherical symmetry, monotonic areal radius, complete interception, and the quotient. It does not depend on the specific radial metric function.
 
 \item \emph{Homogeneous space distinction.} We identified why Schwarzschild requires explicit boundary sourcing, transport, and quotient: unlike AdS, where the Crofton measure is inherited from group symmetry, Schwarzschild has no transitive isometry group acting on its geodesic space.
 
 \item \emph{Bridge extension.} Extending the shell past the throat reveals the absence of a left-returning geodesic sector, the failure of the Crofton formula on the far side of the bridge with a unilateral source, and the need for bilateral sources to achieve a complete reconstruction. At the throat itself, the positive intrinsic crossing measure remains regular and the crossing sector reproduces its full area.
\end{enumerate}

To obtain these results without relying on the special form of the Schwarzschild metric, in Section~\ref{sec:liouville transport} we recast the construction in Hamiltonian language. On any Riemannian manifold the symplectic structure provides an invariant measure on the unit-speed surface, whose contraction along the flow direction defines the oriented transverse measure $\beta$. Theorem~\ref{thm:transport} then shows that $\beta$ is preserved under transport along the geodesic flow between any two transverse crossing sections, even when the travel arclength varies across the section. Applied to the Schwarzschild shell, this transports the intrinsic source measure on $\S_{+}$ to the intrinsic crossing measure on every interior surface, yielding the Crofton formula from first principles. 

It is also useful to situate our construction with respect to the existing mathematics literature, in which three layers should be kept distinct. First, the Crofton formula on homogeneous spaces, where the kinematic measure is inherited from the isometry group, is classical \cite{Santalo,Helgason}. This layer underlies the kinematic-space formulation of AdS/CFT \cite{CzechEtAl,Czech:2015tensor,Czech:2016stereo,Czech:2017zfq,Czech:2019time} and the PEE-thread program \cite{Wen:2018whg,Wen:2019iyq,Wen:2020ech,Lin:2023rxc,Lin:2024dho,WenXuZhong,BasuWen}. Second, the local ingredients of our transport argument (the Liouville measure on the unit tangent bundle, its invariance under the geodesic flow, and the transverse measure obtained by removing the flow direction) are likewise classical and independent of the specific form of the metric. In the mathematics literature, a general Crofton-type formalism for non-homogeneous Riemannian and Finsler manifolds, built from the symplectic structure on the space of geodesics, was developed by \'Alvarez Paiva and collaborators \cite{AlvarezPaivaFernandes,AlvarezPaivaBerck,AlvarezPaivaHilbert}. In particular, a general Crofton formula for Finsler spaces via the Holmes--Thompson volume was established in \cite{AlvarezPaivaBerck}. Third, what is new here is not the local symplectic machinery but its assembly into a complete, uniformly-covering chord measure on a non-homogeneous region: sourcing geodesics from a boundary surface, transporting the transverse measure along the geodesic flow (Theorem~\ref{thm:transport}), and quotienting by the source multiplicity. This singles out the two minimal global conditions, reachability and finite multiplicity, that the boundary source must satisfy for the Crofton reconstruction to close, and extends the reconstruction beyond the homogeneous AdS setting to the Schwarzschild exterior and the Einstein--Rosen bridge.

These geometric results become quantum-information statements once a state is assigned to the chord network. In Section~\ref{sec:pee-shell} we adapt the PEE tensor-network constructions of Refs.~\cite{WenXuZhong,BasuWen} to the reconstruction region $\mathcal W_{\Sigma}$ of a more generic Riemannian manifold, and then specialize to the Schwarzschild exterior. The factorized model, in which each thread carries an EPR pair, reproduces the RT formula, Eq.~\eqref{eq:pee-RT}, whenever the homologous surface $\gamma_A$ is totally geodesic; for a general region one uses the random PEE tensor network, whose minimal cut reproduces the RT formula. For Schwarzschild this yields the Bekenstein--Hawking entropy $S(\S_{+})=A_H/(4G)$ from the horizon area, the RT entropy $S(A)=\Area(\gamma_A)/(4G)$ of a boundary subregion, and the left--right entropy $S_{\mathrm{LR}}=A_H/(4G)$ of the Einstein--Rosen bridge, a concrete realization of ER$=$EPR. These results define a candidate entanglement structure on an asymptotically flat background, with the dual living on a timelike boundary $\S_{+}\times\mathbb R$ at spatial infinity, without establishing a complete holographic duality.

We have shown that a sourceless geodesic gas can tessellate not only vacuum AdS and vacuum solutions of gravitational theories such as the Schwarzschild black hole, but also more generic Riemannian manifolds, which may correspond to solutions of the Einstein equations with a distribution of matter. This indicates that matter fields in a gravitational spacetime do not necessarily act as sources of the PEE threads. Within our PEE tensor-network framework, the presence of a source of PEE threads in the gravitational space would instead require the introduction of additional entanglement.

Several directions remain open for future work. The first is to lift the construction from time-independent spatial slices to genuinely time-dependent spacetimes, which would require replacing the Riemannian geodesic gas by its Lorentzian counterpart and re-deriving the kinematic measure on the space of timelike or null geodesics. The second is to go beyond pure geometry and include quantum excitations, promoting the classical tessellation of a fixed background to a quantum description in which the threads themselves are the fundamental dynamical degrees of freedom. Following \cite{WenXuZhong}, the quantum state at a point of the gravitational space is a tensor built from the quantum states on the open legs of the geodesics passing through that point, and neighboring points are glued together by maximal entanglement. The network state woven from the sourceless geodesics could be the candidate ground state of the quantum gravity theory. To incorporate quantum excitations of the geometry, one would release the gluing between points at selected locations, thereby opening new legs in the bulk that carry the bulk quantum excitations. In this picture the quantum state of an excited bulk particle is likewise carried by the open legs of geodesic segments, which resonates with the spirit of string theory. Moving from static to dynamical backgrounds, and from classical geometry to quantum gravity, are the two goals we hope to address in future work.

\appendix

\section{Proof of the transport theorem}
\label{sec:app-transport}

In this appendix we state and prove the transport theorem used in the main text.

\begin{thm}\label{thm:transport}
	Consider two transverse phase-space crossing sections $C_{1,2}\subset T^*\M$, with $C_{1,2}'\subset C_{1,2}$ being open subsets where the travel arclength required for the geodesic-flow orbit starting at $\xi\in C'_1$ to reach a chosen intersection with $C'_2$ is smoothly well-defined. Then the oriented transverse measure $\beta$ is invariant under the transporting map $P_{12}(\xi)=\varphi_H^{\tau(\xi)}(\xi)$, $\xi\in C'_1$. That is, Eq.~\eqref{eq:transporting map 1} holds.
\end{thm}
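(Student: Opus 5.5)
The plan is to reduce the statement to two facts about the oriented transverse measure $\beta=\iota_{X_H}\dd\lambda$ on the unit-speed surface $\Sbb_{H=1/2}$: that $\beta$ is invariant under the flow at \emph{fixed} time, and that $\beta$ annihilates the flow direction. The variable travel arclength $\tau(\xi)$ will then drop out by a chain-rule argument. Throughout I regard $\dd\lambda$ as a genuine top-degree (five-)form on the five-dimensional manifold $\Sbb_{H=1/2}$. Since $X_H(H)=0$ by Eq.~\eqref{eq:Hamiltonian conservation}, $X_H$ is tangent to $\Sbb_{H=1/2}$, so $\varphi_H^s$ restricts to a flow there. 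The crossing sections $C_{1,2}'$ are four-dimensional submanifolds of $\Sbb_{H=1/2}$ transverse to $X_H$, and $\beta$ is a four-form, so all restrictions make sense.

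\emph{Step 1: $\beta$ is closed and flow-invariant.} Because $\dd\lambda$ has top degree on $\Sbb_{H=1/2}$, we have $\dd(\dd\lambda)=0$. Cartan's formula \eqref{eq:Cartan} together with Eq.~\eqref{eq:Liouville invariance dlambda} then gives
\[
\dd\beta=\dd\iota_{X_H}\dd\lambda=\mathscr L_{X_H}\dd\lambda-\iota_{X_H}\dd(\dd\lambda)=0 .
\]
Applying Cartan once more,
\[
\mathscr L_{X_H}\beta=\dd\iota_{X_H}\beta+\iota_{X_H}\dd\beta=0 .
\]
Here $\iota_{X_H}\beta=\iota_{X_H}^2\dd\lambda=0$. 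Hence $(\varphi_H^s)^*\beta=\beta$ for each fixed $s$, on the domain where the flow is defined.

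\emph{Step 2: pass to variable time.} Define $\Phi:(s,\xi)\mapsto\varphi_H^s(\xi)$ on a neighbourhood of $\{(\tau(\xi),\xi):\xi\in C_1'\}$ in $\Rbb\times C_1'$. Then $P_{12}=\Phi\circ j$, where $j(\xi)=(\tau(\xi),\xi)$ is smooth because $\tau$ is smooth on $C_1'$ by hypothesis. Decompose
\[
\Phi^*\beta=\mathrm{pr}^*\bigl[(\varphi_H^s)^*\beta\bigr]+\dd s\wedge\eta_s ,
\qquad
\eta_s=\iota_{\partial_s}\Phi^*\beta .
\]
Since $\Phi_*\partial_s=X_H$, we get $\eta_s=(\varphi_H^s)^*(\iota_{X_H}\beta)=0$. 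Together with Step 1 this gives $\Phi^*\beta=\mathrm{pr}^*(\beta|_{C_1'})$, where $\mathrm{pr}:\Rbb\times C_1'\to C_1'$ is the projection. Since $\mathrm{pr}\circ j=\mathrm{id}$,
\[
P_{12}^*\bigl(\beta|_{C_2'}\bigr)=j^*\Phi^*\beta=\beta|_{C_1'},
\]
which is Eq.~\eqref{eq:transporting map 1}. Equivalently, at the tangent level $\dd P_{12}(w)=\dd\varphi_H^{\tau}(w)+\dd\tau(w)\,X_H$, and the $X_H$ component is killed by $\beta$. I will also note that the image of $P_{12}$ lies in $C_2'$, so restricting $\beta$ to $C_2'$ before pulling back is consistent.

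I expect the main obstacle to be Step 1, namely being careful that $\dd\lambda$ is treated as a top form intrinsic to $\Sbb_{H=1/2}$ rather than as the distribution $\delta(H-\tfrac12)\dd\mu$ on $T^*\M$. Closedness of $\beta$ relies on this. One clean way is to write $\dd\mu=\dd H\wedge\widetilde\lambda$ near the energy surface and set $\dd\lambda=\widetilde\lambda|_{\Sbb}$. This restriction is independent of the choice of $\widetilde\lambda$, and $\mathscr L_{X_H}$ commutes with the restriction because $X_H(H)=0$. The remaining points are routine: orientations, and the smoothness of $\tau$, which is given by hypothesis and follows from the implicit function theorem since $h(\mathbf v,\mathbf n_{\Sigma_2})\neq0$ on $C_2$.
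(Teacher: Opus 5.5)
Your proposal is correct and takes essentially the same route as the paper. Both proofs first establish fixed-time flow invariance of $\beta$, then factor $P_{12}$ through the two-variable flow map $(s,\xi)\mapsto\varphi_H^s(\xi)$, so that the $\dd\tau\,X_H$ contribution is annihilated by $\iota_{X_H}\beta=0$; you phrase this via $\Phi^*\beta$, the paper via $DP_{12}$. The only minor difference is in Step 1: you get $\mathscr L_{X_H}\beta=0$ by first showing $\dd\beta=0$ from $\dd\lambda$ being top-degree on $\Sbb_{H=1/2}$, whereas the paper uses $\mathscr L_X\iota_Y-\iota_Y\mathscr L_X=\iota_{[X,Y]}$ with $X=Y=X_H$ and never needs closedness of $\beta$.
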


\begin{proof}
	For vector fields $X,Y$ and any differential form $\eta$, the Lie derivative and contraction obey
	\begin{equation}
		\mathscr{L}_X\iota_Y\eta
		-\iota_Y\mathscr{L}_X\eta
		=\iota_{[X,Y]}\eta,
		\label{eq:Lie contraction}
	\end{equation}
	where $[X,Y]$ is the Lie bracket.  The oriented transverse measure $\beta$ is therefore invariant under the fixed-arclength Hamiltonian flow:
	\begin{equation}
		\mathscr{L}_{X_H}\beta
		=\mathscr{L}_{X_H}\iota_{X_H}\dd\lambda
		=\iota_{X_H}\mathscr{L}_{X_H}\dd\lambda
		+\iota_{[X_H,X_H]}\dd\lambda
		=0,
	\end{equation}
	where the two terms vanish separately because of \eqref{eq:Liouville invariance dlambda} and $[X_H,X_H]=0$, respectively.  By the definition of the Lie derivative in \eqref{eq:Lie derivative from Hamiltonian flow}, this statement is equivalent to
	\begin{equation}\label{eq:transported measure 1}
		(\varphi_H^s)^*\beta=\beta
	\end{equation}
	for all fixed $s$.
	
	The transporting map \eqref{eq:transporting map}, however, evolves different initial points through different arclengths $\tau(\xi)$.  We now show directly that this variation does not change the transverse measure.  Introduce the two-variable flow map $\Psi(s,\xi)$, where $\xi=(x,p)\in\Sbb_{H=1/2}$ denotes a full phase-space state, i.e. a base point $x\in\M$ together with a unit-speed momentum covector $p$ that encodes the direction, by
	\begin{equation}
		\Psi:\Rbb\times\Sbb_{H=1/2}\longrightarrow\Sbb_{H=1/2},
		\qquad
		\Psi(s,\xi)=\varphi_H^s(\xi),
	\end{equation}
	on the neighborhood where the local flow is defined, and let $g(\xi)=(\tau(\xi),\xi)$.  Then $P_{12}=\Psi\circ g$.    For $w\in T_\xi(C_1')$ \footnote{Here $T_\xi(C_1')$ is the tangent space of the crossing section $C_1'$ at $\xi$, i.e. the space of infinitesimal perturbations of the state $\xi$ that stay within $C_1'$ (shifting the base point along $\Sigma_1$ and tilting the unit direction).}, the differential of the map $g$ reads
	\begin{equation}
		Dg|_\xi(w)=\bigl(\dd\tau|_\xi(w),w\bigr),
	\end{equation}
	where the first component $\dd\tau|_\xi(w)\in\mathbb R$ is the rate at which the travel arclength $\tau$ changes along the perturbation $w$, while the second component $w$ is the variation of the state $\xi$ itself, i.e. the identity part of $g$. The differential of $\Psi$ is
	\begin{equation}\label{eq:differential of Psi}
		D\Psi|_{(s,\xi)}(a,w)
		=a\,\partial_s\Psi(s,\xi)+D_\xi\Psi(s,\xi)(w).
	\end{equation}
	Here $D_\xi$ means that the arclength parameter $s$ is held fixed while the initial point $\xi$ is varied.  Thus $D_\xi\Psi(s,\xi)$ maps an initial perturbation $w\in T_\xi\Sbb_{H=1/2}$ to the resulting perturbation at $\Psi(s,\xi)$ after the same arclength $s$.  It is distinct from $\partial_s\Psi(s,\xi)$, which varies the arclength parameter while keeping the initial point $\xi$ fixed.  The factor $a$ appears because it is the actual rate of change in the $s$ direction.  Indeed, choose a curve $(s(\epsilon),\xi(\epsilon))$ with $\dot s(0)=a$ and $\dot\xi(0)=w$.  The ordinary chain rule gives
	\begin{equation}
		\left.\frac{\dd}{\dd\epsilon}
		\Psi\bigl(s(\epsilon),\xi(\epsilon)\bigr)
		\right|_{\epsilon=0}
		=a\,\partial_s\Psi(s,\xi)
		+D_\xi\Psi(s,\xi)(w).
	\end{equation}
	Thus $\partial_s\Psi$ is the change per unit increase of $s$, while multiplication by $a={\dd}s/\dd\epsilon$ gives the change along the chosen tangent vector.
	
	Since $\partial_s\varphi_H^s(\xi)=X_H|_{\varphi_H^s(\xi)}$, the chain rule $DP_{12}|_\xi=D\Psi|_{g(\xi)}\circ Dg|_\xi$ applied to $P_{12}=\Psi\circ g$ gives
	\begin{equation}\label{eq:differential of P12}
		DP_{12}|_\xi(w)
		=\dd\tau|_\xi(w)\,X_H|_{P_{12}(\xi)}+D_\xi\varphi_H^{\tau(\xi)}(w),
	\end{equation}
	where the first term records the variation of the travel arclength, while the second records the variation of the initial point at fixed travel arclength.
	
	We now relate \eqref{eq:differential of P12} to the pullback of the oriented transverse measure $\beta$. Choose four linearly independent tangent vectors $w_1,\ldots,w_4\in T_\xi(C_1')$, which span the $4$-dimensional tangent space $T_\xi(C_1')$ and thereby define an infinitesimal $4$-cell in the crossing section $C_1'$. The definition of pullback gives
	\begin{equation}\label{eq:pullback1}
		\begin{aligned}
			&\left[P_{12}^*\left(\left.\beta\right|_{C_2'}\right)\right]_\xi
			(w_1,\ldots,w_4)=\beta_{P_{12}(\xi)}
			\bigl(DP_{12}|_\xi(w_1),\ldots,
			DP_{12}|_\xi(w_4)\bigr).
		\end{aligned}
	\end{equation}
	Here $\beta|_{C_2'}$ is the restriction of the $4$-form $\beta$ to the crossing section $C_2'$, which acts only on tangent vectors of $C_2'$, whereas $\beta_{P_{12}(\xi)}$ is the value of $\beta$ at the single point $P_{12}(\xi)\in C_2'$, acting on the whole tangent space $T_{P_{12}(\xi)}\Sbb_{H=1/2}$. On the arguments $DP_{12}|_\xi(w_j)$, which are tangent to $C_2'$, the two coincide. Thus \eqref{eq:pullback1} is just the definition of the pullback: it pushes the four tangent vectors $w_j$ forward to the target point $P_{12}(\xi)$ and applies $\beta$ there, evaluating the oriented transverse measure of the infinitesimal $4$-cell transported from $\xi$ to $P_{12}(\xi)$. Equivalently, since $\beta$ is the flux density of the geodesic gas, \eqref{eq:pullback1} computes the differential of the flux, i.e. the infinitesimal flux through the transported $4$-cell. Concretely, the two sides of \eqref{eq:pullback1} evaluate the flux of the same family of geodesics, once on the source section $C_1'$ and once on the target section $C_2'$. Once the theorem is established, this equality becomes the statement that the geodesic flux is conserved along the geodesic flow (equivalently, that the geodesic gas has no sources or sinks in the interior), in agreement with the setup of Section~\ref{sec:2}.
	
	Set
	\begin{equation}
		A_j\equiv D_\xi\varphi_H^{\tau(\xi)}(w_j),
		\qquad
		a_j\equiv \dd\tau|_\xi(w_j).
	\end{equation}
	Equation \eqref{eq:differential of P12} reads $DP_{12}|_\xi(w_j)=A_j+a_jX_H$, where we use $X_H$ to denote $X_H|_{P_{12}(\xi)}$ for simplicity.  Since
	\begin{equation}
		\iota_{X_H}\beta
		=\iota_{X_H}\iota_{X_H}\dd\lambda=0,
	\end{equation}
	the contraction identity means
	\begin{equation}
		\iota_{X_H}\beta
		=\beta(X_H,Y_1,Y_2,Y_3)=0
	\end{equation}
	for arbitrary tangent vectors $Y_1,Y_2,Y_3$.  If $X_H$ occurs in another slot, the antisymmetry of $\beta$ moves it to the first slot.  By the linearity and antisymmetry of $\beta$, we therefore have
	\begin{equation}\label{eq:pullback2}
		\begin{aligned}
			\beta_{P_{12}(\xi)}
			\bigl(A_1+a_1X_H,\ldots,A_4+a_4X_H\bigr)&=
			\beta_{P_{12}(\xi)}(A_1,\ldots,A_4)\\&=
			\left[\left(\varphi_H^{\tau(\xi)}\right)^*\beta\right]_\xi
			(w_1,\ldots,w_4)\\
			&=\beta|_\xi
			(w_1,\ldots,w_4),
		\end{aligned}
	\end{equation}
	where in the first line every term containing at least one $X_H$ vanishes, in the second line we use the definition of pullback again, and in the third line we use \eqref{eq:transported measure 1}. Note that the pullback in the second line is by the fixed-arclength flow $\varphi_H^{\tau(\xi)}$, not by the transporting map $P_{12}$. Once the $X_H$-terms are dropped in the first line, only the fixed-arclength part $D_\xi\varphi_H^{\tau(\xi)}$ survives, which is precisely the differential of $\varphi_H^{\tau(\xi)}$.  Finally, combining \eqref{eq:pullback1} and \eqref{eq:pullback2} yields
	\begin{equation}
		P_{12}^*\left(\left.\beta\right|_{C_2'}\right)
		=\left.\beta\right|_{C_1'}.
		\label{eq:Poincare transport}
	\end{equation}
	Thus the oriented transverse measure $\beta$ is invariant under the transporting map from $\Sigma_1$ to $\Sigma_2$, even though the travel arclength varies across the section.
	
\end{proof}


\section*{Acknowledgements}

The authors are supported by the NSFC Grant No. 12447108 and the Shing-Tung Yau Center of Southeast University. The authors thank D. Basu and M. Xu for related collaboration.
\bibliography{schwarzschild_crofton}

\end{document}